\numberwithin{equation}{section}
\numberwithin{figure}{section}
\numberwithin{table}{section}
\newtheorem{definition}{Definition}[section]
\newtheorem{proposition}{Proposition}[section]
\newtheorem{algorithm}{Algorithm}[section]
\newtheorem{theorem}{Theorem}[section]
\newcommand{\1}{\mathds{1}}
\newcommand{\Z}{\mathbb{Z}}
\newcommand{\bhline}[1]{\noalign{\hrule height #1}}
\begin{document}
\title{Recursive computation for evaluating the exact $p$-values of temporal and spatial scan statistics}
\author{Satoshi Kuriki\thanks{
The Institute of Statistical Mathematics,
10-3 Midoricho, Tachikawa, Tokyo 190-8562, Japan,
Email: \texttt{kuriki@ism.ac.jp}
(corresponding author)
},
\ \ %
Kunihiko Takahashi\thanks{
Department of Biostatistics, Nagoya University Graduate School of Medicine, 
65 Tsurumi-cho, Showa-ku, Nagoya 466-8550, Japan,
Email: \texttt{kunihiko@med.nagoya-u.ac.jp}
},
\ \ %
Hisayuki Hara\thanks{
Faculty of Economics, Niigata University,
8050 Ikarashi 2-no-cho Nishi-ku, Niigata 950-2181, Japan,
Email: \texttt{hara@econ.niigata-u.ac.jp}
}
}
\date{}
\maketitle
\begin{abstract}
Let $V$ be a finite set of indices, and let $B_i$, $i=1,\ldots,m$, be subsets of $V$ such that $V=\bigcup_{i=1}^{m}B_i$. 
Let $X_i$, $i\in V$, be independent random variables, and let $X_{B_i}=(X_j)_{j\in B_i}$.
In this paper, we propose a recursive computation method to calculate the conditional expectation $E\bigl[\prod_{i=1}^m\chi_i(X_{B_i}) \,|\, N\bigr]$ with $N=\sum_{i\in V}X_i$ given, where $\chi_i$ is an arbitrary function.
Our method is based on the recursive summation/integration technique using the Markov property in statistics.
To extract the Markov property, we define an undirected graph whose cliques are $B_j$, and obtain its chordal extension, from which we present the expressions of the recursive formula.
This methodology works for a class of distributions including the Poisson distribution (that is, the conditional distribution is the multinomial).
This problem is motivated from the evaluation of the multiplicity-adjusted $p$-value of scan statistics in spatial epidemiology.
As an illustration of the approach, we present the real data analyses to detect temporal and spatial clustering.

\smallskip\noindent
\textit{Keywords and phrases:\/}
change point analysis,
chordal graph,
graphical model,
Markov property,
spatial epidemiology.
\end{abstract}

\section{Introduction}
\label{sec:introduction}
Let
\[
 X_V=(X_i)_{i\in V}=(X_1,\ldots,X_n), \quad V=\{1,\ldots,n\},
\]
be a random vector whose index set is $V$.
Throughout the paper, we use the convention that $X_Z=(X_i)_{i\in Z}$ when $Z$ is a set of indices.
Suppose that $X_i$ are distributed independently according to the Poisson distribution and consider the distribution of $X_V$ when $\sum_{i\in V}X_i=N$ is given.
That is, $X_V$ follows the multinomial distribution with probability $p_V=(p_i)_{i\in V}$ and total sum $N$:
\[
 X_V |_{\sum_{i\in V}X_i=N} \,\sim \mathrm{Mult}(N;p_V).
\]
Let $Z_1,\ldots,Z_m$ be subsets of $V$ satisfying $V=\bigcup_{i=1}^m Z_i$.
The main technical result of the paper is to provide an algorithm to evaluate the conditional expectation
\begin{equation}
\label{expectation}
 E\Bigl[{\prod}_{i=1}^m \chi_i(X_{Z_i}) \,|\, N \Bigr],
\end{equation}
where $\chi_i(X_{Z_i})$ is an arbitrary function of $X_{Z_i}$.
We also consider a generalization where $X_i$ are distributed as a class of distributions including the multinomial distribution.

This problem arises from the evaluation of the multiplicity-adjusted $p$-value of scan statistics.
We begin with stating the scan statistics in spatial epidemiology, which is a typical example in this framework.

In a certain country, there are $n$ districts.
Let $V=\{1,\ldots,n\}$ be the set of districts.
For each district $i\in V$, we suppose that the number $X_i$ for event we are focusing on (e.g., number of patients with some disease) as well as its expected frequency $\lambda_i$ estimated from historical data are available.
$X_i$ is assumed to be distributed according to the Poisson distribution with parameter $\theta_i\lambda_i$, $\mathrm{Po}(\theta_i\lambda_i)$, independently for $i\in V$.
The parameter $\theta_i$ is known as the standardized mortality ratio (SMR).
Figure \ref{fig:yamagata-smr} depicts an example of a choropleth map of SMRs.
For 44 districts, we indicate the values of the SMRs with different colors.

A set of adjacent districts with SMRs higher than other areas is called a \textit{disease cluster\/}.
The detection of such disease clusters is a major interest in spatial epidemiology.
To detect a disease cluster, we settle a family of subsets
\[
 \mathcal{Z} = \{ Z_1,\ldots,Z_m \}, \quad Z_i\subset V,
\]
as candidates of a disease cluster, and define a scan statistic $\varphi_{Z_i}(X_{Z_i})$ for each $Z_i\in\mathcal{Z}$.
$Z_i$ is called the \textit{scan window\/}.
The choice of the scan windows is an important research topic in spatial epidemiology
\citep{Kulldorff06}.
When $\varphi_{Z_i}(X_{Z_i})$ is larger than a threshold, say $c$, we declare that $Z_i$ is a disease cluster.
As such a scan statistic, \cite{Kulldorff97} proposed the use of the likelihood ratio test (LRT) statistic $\varphi_Z(X_Z)$ for the null hypothesis
$H_0\,:\,\theta_i \equiv \theta_{0}$ (constant) against the alternative
\[
 H_Z : \ \theta_i=\theta_{Z} \ (i\in Z),\ \ \theta_i=\theta_{\overline Z} \ (i\notin Z) \ \ \mbox{such that}\ \theta_{Z} > \theta_{\overline Z}
\]
under the conditional distribution with $N=\sum_{i\in V}X_i$ given.
The conditional inference (inference under the conditional distribution) leads to a similar test independent of the nuisance parameter $\theta_{0}$.
The expression of $\varphi_Z(X_Z)$ is given in Section \ref{subsec:scan}.
When the hypothesis $H_Z$ holds, the disease cluster $Z$ is called a hotspot.

This is a typical problem of multiple comparisons.
The $p$-value to assess the significance should be adjusted to incorporate the multiplicity effect.
One method is to define the $p$-value from the distribution of the maximum $\max_{Z \in \mathcal{Z}} \varphi_Z(X_Z)$ under $H_0$:
\begin{equation}
 P\Bigl(\max_{Z \in \mathcal{Z}} \varphi_Z(X_Z) \le c \,|\, N \Bigr)
= E\Bigl[{\prod}_{Z\in\mathcal{Z}} \chi_Z(X_Z) \,|\, N \Bigr],
\label{conditional}
\end{equation}
where $\chi_Z(X_Z) = \1{\{ \varphi_Z(X_Z) \le c \}}$.
This is of the form of (\ref{expectation}).
Note that when $N$ is given, the distribution of $X_V$ is
$\mathrm{Mult}(N; p_V)$, where $p_i=\lambda_i/\sum_{i\in V}\lambda_i$.

In spatial epidemiology, the $p$-value is usually estimated using Monte Carlo simulation.
Although this is convenient and practical in most cases, when the true $p$-value is very small, it is difficult to obtain a precise value even when the number of random numbers in the Monte Carlo is large.
Therefore, we have good reason to conduct exact computation according to the definition (\ref{conditional}) by enumerating all possibilities.
However, this is generally difficult because of the computational complexity. 

In the area of multiple testing comparisons, many techniques to reduce computational time have been proposed.
For example, \cite{Worsley86} demonstrated that in a change point problem,
the computational time for the distribution of the maximum could be reduced using the Markov property among statistics.
See also \cite{Kuriki-etal02}, \cite{Hirotsu-Marumo02} and references therein.
In this paper, we develop a similar computation technique by taking advantage of the Markov structure among scan statistics.
The proposed method is based on the theory of a chordal graph, which is the foundation of the theory of graphical models \citep{Lauritzen96}.
The chordal graph theory provides rich tools, in not only statistics but also many fields of mathematical science.
In particular, in numerical analysis, this is a major tool used to conduct large-scale matrix computation (e.g., \cite{Rose71},\cite{Yamashita08}).
We also apply the chordal graph theory to retrieve the Markov structure to reduce the computational time by using the recursive summation/integration technique.
Our technique is then similar to those used in the efficient computation of maximum likelihood estimator of graphical models
(e.g., \cite{Badsberg-Malvestuto}, \cite{Hara-Takemura10}, \cite{XuGuoHe11}, \cite{XuGuoTang12}, \cite{XuGuoTang14}). 

The remainder of the paper is organized as follows.
Section \ref{sec:recursive} provides the recurrence computational formula in the multinomial distribution under the assumption that the running intersection property holds.
We evaluate the computational complexity,
and show that the recurrence computation methodology works for a class of distributions including the Poisson distribution (that is, the conditional distribution is the multinomial).
Section \ref{sec:markov} proposes a method to detect the Markov property,
and Section \ref{sec:examples} presents illustrative real data analyses to detect temporal and spatial clustering.
Section \ref{sec:summary} briefly summarizes our results and discusses further research.

\section{Recursive computation of conditional expectations}
\label{sec:recursive}

In this section, we provide an algorithm to compute the expectation
(\ref{expectation}) when the sequence $Z_1,\ldots,Z_m$ of subsets of $V$ 
has a nice property,
which is called the running intersection property given in Definition \ref{def:rip}.
We will consider the general case in the next section.
In this section, we use the symbol $B_i$ instead of $Z_i$.

\subsection{The case where $m=2$}

We start with case $m=2$.
For $1<l_1\le l_2<n$, let $B_1=\{1,\ldots,l_2\}$ and $B_2=\{l_1+1,\ldots,n\}$.
Suppose that
\[
 X_{B_1\cup B_2} = (X_1,\ldots,X_n) \sim \mathrm{Mult}(N;p_{B_1\cup B_2})
\]
is a random vector distributed according to the multinomial distribution with summation $N$ and probability $p_{B_1\cup B_2}=(p_1,\ldots,p_n)$.
We consider the evaluation of the expectation
\begin{equation}
 E[\chi_1(X_{B_1}) \chi_2(X_{B_2})]
\label{naive}
\end{equation}
exactly, where $\chi_1$ and $\chi_2$ are arbitrary functions.

Obviously, $X_{B_1}$ and $X_{B_2}$ are not independent;
there is an overlap $X_{C_1}=(X_{l_{1+1}},\ldots,X_{l_2})$ unless $C_1=B_1\cap B_2$ is empty.
Moreover, there is a restriction that
\[
\sum_{i\in B_1}X_i + \sum_{i\in B_2}X_i - \sum_{i\in C_1}X_i = N.
\]

Instead of the problem of evaluating the expectation, by a change of viewpoint,
we first consider the problem of generating random numbers 
$X_{B_1\cup B_2}=(X_{R_1},X_{R_2})$, where $R_1=B_1\setminus C_1$, $R_2=B_2$.
$X_{B_1\cup B_2}$ can be generated according to the following three steps:
\begin{align}
 (M_2,M_1)|_N &\sim \mathrm{Mult}\Bigl(N;\Bigl(\sum_{i\in R_2} p_i,\sum_{i\in R_1} p_i\Bigr)\Bigr),
\label{M2M1} \\
 X_{R_2}|_{M_2} &\sim \mathrm{Mult}\Bigl(M_2;p_{R_2}/\sum_{i\in R_2} p_i\Bigr),
\label{XR2} \\
 X_{R_1}|_{M_1} &\sim \mathrm{Mult}\Bigl(M_1;p_{R_1}/\sum_{i\in R_1} p_i\Bigr),
\label{XR1}
\end{align}
where $X_{R_2}$ and $X_{R_1}$ are independent given $(M_2,M_1)$.  
Correspondingly, we divide the expectation in (\ref{naive}) into three parts as
\begin{align}
E[\chi_1(X_{B_1}) \chi_2(X_{B_2})]
&= E^{(M_2,M_1)|N}\bigl[ E^{X_{R_2}|M_2}\bigl[ E^{X_{R_1}|M_1}[\chi_1(X_{B_1}) \chi_2(X_{B_2})] \bigr]\bigr]  \nonumber \\
&= E^{(M_2,M_1)|N}\bigl[ E^{X_{R_2}|M_2}\bigl[ \chi_2(X_{B_2}) E^{X_{R_1}|M_1}[\chi_1(X_{B_1}) ] \bigr]\bigr]  \nonumber \\
&= E^{(M_2,M_1)|N}\bigl[ E^{X_{R_2}|M_2}[ \chi_2(X_{B_2}) \xi_1(M_1,X_{C_1})] \bigr],
\label{E}
\end{align}
where
\begin{align}
\xi_1(M_1,X_{C_1})
&= E^{X_{R_1}|M_1}[\chi_1(X_{B_1})] \nonumber \\
&= E^{X_{R_1}|M_1}[\chi_1(X_{C_1},X_{R_1})].
\label{xi}
\end{align}

The procedure for the numerical computation of (\ref{E}) is as follows.
(i) For possible values of $M_1$ and $X_{C_1}$, compute $\xi_1(M_1,X_{C_1})$ in (\ref{xi}).
Here, the expectation is taken over $X_{R_1}$ according to (\ref{XR1}) with $X_{C_1}$ fixed.
The results are stored in memory as a tabulation.
(ii) Compute $E[\chi_1(X_{B_1}) \chi_2(X_{B_2})]$ according to (\ref{E}).
Here, $M_2$, $M_1$, and $X_{R_2}$ are random variables having distributions (\ref{M2M1}) and (\ref{XR2}). 

This technique substantially reduces the computational cost.
To see this, we enumerate the number of required summations in detail.
$M_1$ takes the values $0\le M_1\le N$ and $X_{C_1}$ takes the values of all nonnegative integer vectors whose sum is less than or equal to $M_2=N-M_1$.
In the expectation $E^{X_{R_1|M_1}}[\cdot]$ in (\ref{xi}),
the variable $X_{R_1}$ runs over all nonnegative vectors whose sum is $M_1$.
Because
\[
 \sharp\Bigl\{ (x_1,\ldots,x_d)\in\Z^d \mid x_i\ge 0,\ {\sum}_{i=1}^d x_i=n \Bigr\} = {n+d-1 \choose d-1},
\]
\[
 \sharp\Bigl\{ (x_1,\ldots,x_d)\in\Z^d \mid x_i\ge 0,\ {\sum}_{i=1}^d x_i\le n \Bigr\} = {n+d \choose d},
\]
we see that the number of summations to prepare the table $\xi_1(M_1,X_{C_1})$ is
\[
\sum_{M_1=0}^N {N-M_1+|C_1| \choose |C_1|} {M_1+|R_1|-1 \choose |R_1|-1}
= {N+|C_1|+|R_1| \choose |C_1|+|R_1|} = {N+|B_1| \choose |B_1|}.
\]
In the expectations $E^{(M_2,M_1)|N}[ E^{X_{R_2}|M_2}[\cdot]]$ in (\ref{E}), the variable $X_{R_2}$ runs over all nonnegative vectors whose sum is $M_2$, and $M_2+M_1=N$.
Hence, the number of summations for computing (\ref{E}) is
\[
 \sum_{M_2=0}^N {M_2+|R_2|-1 \choose |R_2|-1} = {N+|R_2| \choose |R_2|} = {N+|B_2| \choose |B_2|}.
\]
Therefore, the number of summations is
\[
 {N+|B_1| \choose |B_1|} + {N+|B_2| \choose |B_2|} = O(N^{\max(|B_1|,|B_2|)}).
\]
On the other hand, when we do not use this technique, the number of summations is
\[
 {N+|B_1\cup B_2|-1 \choose |B_1\cup B_2|-1} = O(N^{|B_1\cup B_2|-1}),
\]
which is of larger order than $O(N^{\max(|B_1|,|B_2|)})$.

\subsection{The case of general $m$}

This technique for reducing computational time is available in a general setting.
To describe it, we introduce several notions from the theory of chordal graphs and graphical models \citep{Blair-Peyton93, Lauritzen96}.
\begin{definition}
\label{def:rip}
A sequence of sets $B_1,\ldots,B_m$ is said to have the running intersection property (RIP) if for each $1\le i\le m-1$, there is a $k(i)>i$ such that
\begin{equation}
\label{rip}
 B_i \cap \Bigl( {\bigcup}_{j>i} B_j \Bigr) = B_i \cap B_{k(i)}.
\end{equation}
\end{definition}

Throughout this section, we suppose that sequence $B_1,\ldots,B_m$ satisfies the running intersection property.
Note that the indices $i$ of $B_i$ in Definition \ref{def:rip} are reversely ordered from the conventional definition.

The function $k(\cdot)$ defines a directed graph $(V,E)$,
where $V=\{1,\ldots,m\}$ and $E=\{(i,k(i)) \mid i\in V \}$.
The running intersection property is explained in detail in Section \ref{sec:markov}.

For $i,j\in V$, write $i\preceq j$ iff $j=i$ or
\[
 j=\underbrace{k(k(\cdots(k}_{h\,\mathrm{times}}(i))\cdots)) \ \ \mbox{for some $h$}.
\]

Let $R_i=B_i\setminus B_{k(i)}$ ($i<m$) and $R_m=B_m$, where
$R_i$ are residual sets.
The disjoint union is denoted by $\sqcup$.
The proposition below follows from the running intersection property.
\begin{proposition}
\[
 V = \bigsqcup_{i=1}^m R_i.
\]
\end{proposition}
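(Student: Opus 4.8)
The plan is to check the two defining properties of a disjoint-union decomposition separately: that the $R_i$ cover $V$, and that they are pairwise disjoint. One inclusion is trivial: since each $R_i \subseteq B_i$ and $V = \bigcup_{i=1}^m B_i$, we get $\bigcup_{i=1}^m R_i \subseteq V$. So the real content lies in the reverse inclusion together with disjointness, and both follow from a single observation about ``the largest index at which a point appears''.

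For coverage, fix $x \in V$ and let $i$ be the largest index with $x \in B_i$; such an $i$ exists because $V = \bigcup_j B_j$, so the (finite, nonempty) set of indices $j$ with $x\in B_j$ has a maximum. If $i = m$ then $x \in B_m = R_m$. If $i < m$, maximality gives $x \notin B_j$ for all $j > i$, and in particular $x \notin B_{k(i)}$ because $k(i) > i$; hence $x \in B_i \setminus B_{k(i)} = R_i$. In either case $x$ lies in some $R_i$, so $V \subseteq \bigcup_{i=1}^m R_i$.

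For disjointness, suppose $x \in R_i \cap R_j$ with $i < j$. Then $i < m$, so $R_i = B_i \setminus B_{k(i)}$, which gives $x \in B_i$ and $x \notin B_{k(i)}$. On the other hand $x \in R_j \subseteq B_j$ with $j > i$, so $x \in B_i \cap \bigl(\bigcup_{l > i} B_l\bigr)$, and the running intersection property (\ref{rip}) forces $x \in B_i \cap B_{k(i)} \subseteq B_{k(i)}$ --- a contradiction. Hence $R_i \cap R_j = \emptyset$ for all $i \ne j$, and combined with coverage this proves $V = \bigsqcup_{i=1}^m R_i$.

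I expect no genuine obstacle here; the only point needing care is the index bookkeeping --- ordering any pair so that the smaller index is strictly less than $m$, which makes the formula $R_i = B_i \setminus B_{k(i)}$ applicable, and keeping track of the fact that coverage is where the hypothesis $V = \bigcup_i B_i$ is used, whereas the running intersection property is precisely what powers disjointness.
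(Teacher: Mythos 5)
Your proof is correct and follows essentially the same route as the paper: RIP is used exactly once, to derive the contradiction $x\in B_i\cap\bigl(\bigcup_{l>i}B_l\bigr)=B_i\cap B_{k(i)}$ against $x\notin B_{k(i)}$ for a point in $R_i\cap R_j$ with $i<j$. The only difference is that you spell out the coverage $V\subseteq\bigcup_i R_i$ via the largest index containing $x$, a step the paper dismisses as obvious.
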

\begin{proof}
Obviously $V=\bigcup_{i=1}^m B_i=\bigcup_{i=1}^m R_i$.
We prove $R_i\cap R_j=\emptyset$ for $i\ne j$.
Suppose that $R_i\cap R_j\ne\emptyset$ for $i<j$.
Let $x\in R_i\cap R_j$.
Since $x\in B_i,B_j$, $x\in B_i\cap(\bigcup_{j>i} B_j)=B_i\cap B_{k(i)}$.
On the other hand, $x\in R_i$ implies $x\notin B_{k(i)}$.
This is a contradiction.
\end{proof}

Write $C_i=B_i\cap B_{k(i)}$ ($i=1,\ldots,m-1$), $C_m=\emptyset$,
$R_i=B_i\setminus C_i$ and $T_i=\bigsqcup_{j\preceq i}R_j$.
$T_i$ is defined recursively as
\[
 T_i=\begin{cases}
 R_i \sqcup \bigsqcup_{j\in k^{-1}(i)} T_j & (k^{-1}(i)\ne\emptyset), \\
 R_i                                       & (k^{-1}(i)=\emptyset).
 \end{cases}
\]
We state the proposed summation technique in general form in the theorem below.
\begin{theorem}
\label{thm:recursive}
Suppose that a sequence of sets $B_1,\ldots,B_m$ satisfies the running intersection property with respect to a function $k(\cdot)$.
Define $C_i$, $R_i$ and $T_i$ as before. 
Let $\chi_i(X_{B_i})=\chi_i(X_{C_i},X_{R_i})$ be an arbitrary function of $X_{B_i}=(X_{C_i},X_{R_i})$.
Suppose that
\[
 X_V \sim \mathrm{Mult}(N;p_V), \quad V=\bigcup_{i=1}^m B_i,
\]
is a multinomial random vector. 
Define
\begin{equation}
\label{xi0}
 \xi_i(N_i,X_{C_i}) = E\Bigl[{\prod}_{j\preceq i} \chi_j(X_{B_j}) \mid X_{C_i}, {\sum}_{j\in T_i} X_j = N_i \Bigr].
\end{equation}
In particular, define
\[
 \xi_m(N,\emptyset) = E\Bigl[{\prod}_{i=1}^m \chi_i(X_{B_i})\Bigr].
\]
Then
\begin{align}
& \xi_i(N_i,X_{C_i}) \nonumber \\
&= \begin{cases}
\displaystyle
 E^{\left(M_i,(N_j)_{j\in k^{-1}(i)}\right)|N_i}\Bigl[E^{X_{R_i}|M_i}\Bigl[\chi_i(X_{C_i},X_{R_i}){\prod}_{j\in k^{-1}(i)}\xi_j(N_j,X_{C_j})\Bigr]\Bigr] \\
\displaystyle
 & \hspace*{-20mm}
 (k^{-1}(i)\ne\emptyset), 
\\
\displaystyle
 E^{X_{R_i}|N_i} \bigl[\chi_i(X_{C_i},X_{R_i})\bigr]
 & \hspace*{-20mm}
 (k^{-1}(i)=\emptyset). \end{cases} \nonumber \\
&
\label{recursive}
\end{align}
Here, the conditional expectations in (\ref{recursive}) are with respect to
\begin{align}
(M_i,(N_j)_{j\in k^{-1}(i)})|N_i &\,\sim\, \mathrm{Mult}\biggr( N_i;\frac{(\sum_{j\in R_i}p_j,(\sum_{l\in T_j}p_l)_{j\in k^{-1}(i)})}{\sum_{j\in T_i} p_j} \biggl), \nonumber \\
X_{R_i}|M_i &\,\sim\,\mathrm{Mult}\biggl( M_i;\frac{p_{R_i}}{\sum_{j\in R_i} p_j} \biggr) \quad \mbox{for } k^{-1}(i)\ne\emptyset,
\label{condprob}
\end{align}
and
\begin{align*}
X_{R_i}|N_i &\,\sim\,\mathrm{Mult}\biggl( N_i;\frac{p_{R_i}}{\sum_{j\in R_i} p_j} \biggr) \quad \mbox{for } k^{-1}(i)=\emptyset.
\end{align*}
\end{theorem}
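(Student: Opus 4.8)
The plan is to combine two ingredients: the \emph{Markov property of the multinomial distribution} and the combinatorial structure supplied by the running intersection property. For the first, I will repeatedly use the following elementary fact: if $Y=(Y_i)_{i\in U}\sim\mathrm{Mult}(n;q_U)$ and $U=U_1\sqcup\cdots\sqcup U_r$, then the aggregated sums $S_l={\sum}_{i\in U_l}Y_i$ satisfy $(S_1,\ldots,S_r)\sim\mathrm{Mult}\bigl(n;({\sum}_{i\in U_l}q_i)_l\bigr)$, and, conditionally on $(S_1,\ldots,S_r)$, the blocks $Y_{U_1},\ldots,Y_{U_r}$ are mutually independent with $Y_{U_l}\sim\mathrm{Mult}\bigl(S_l;q_{U_l}/{\sum}_{i\in U_l}q_i\bigr)$, depending only on $S_l$. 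Taking $r=2$ and noting that $S_2=n-S_1$ is a function of $S_1$, one obtains the corollary: for $U'\subseteq U$ and any random variable $Z$ that is a function of $Y_{U\setminus U'}$, the conditional law of $Y_{U'}$ given $\{{\sum}_{i\in U'}Y_i=s,\ Z=z\}$ equals $\mathrm{Mult}\bigl(s;q_{U'}/{\sum}_{i\in U'}q_i\bigr)$, irrespective of $z$. These are what the three-step generation scheme (\ref{M2M1})--(\ref{XR1}) encodes for $m=2$.

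For the second ingredient I first record the combinatorial facts that make the recursion meaningful. Write $W_j={\bigcup}_{l\preceq j}B_l$. I claim: (a) $C_j\cap T_j=\emptyset$ and $W_j\subseteq C_j\sqcup T_j$; (b) for $j\in k^{-1}(i)$ one has $C_j\subseteq B_i$ and $T_j\cap B_i=\emptyset$. (The identities $T_i=R_i\sqcup\bigsqcup_{j\in k^{-1}(i)}T_j$, hence the pairwise disjointness of the $T_j$, are already in the statement.) For (b): $C_j=B_j\cap B_{k(j)}=B_j\cap B_i\subseteq B_i$; and for any $l\preceq j$ we have $l\le j<i$, so $B_i\subseteq{\bigcup}_{s>l}B_s$ and the running intersection property gives $B_l\cap B_i\subseteq B_l\cap({\bigcup}_{s>l}B_s)=B_l\cap B_{k(l)}=C_l$, whence $R_l\cap B_i=(B_l\cap B_i)\setminus C_l=\emptyset$; taking the union over $l\preceq j$ yields $T_j\cap B_i=\emptyset$. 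The same computation with $B_{k(i)}$ in place of $B_i$ (valid for $i<m$; trivial for $i=m$, as $C_m=\emptyset$) gives $R_l\cap B_{k(i)}=\emptyset$, hence $R_l\cap C_i=\emptyset$ for all $l\preceq i$ (since $C_i\subseteq B_{k(i)}$), whence $C_i\cap T_i=\emptyset$, the first part of (a). The inclusion $W_j\subseteq C_j\sqcup T_j$ I prove by induction on the size of $\{l:l\preceq j\}$, using $\{l:l\preceq j\}=\{j\}\sqcup\bigsqcup_{s\in k^{-1}(j)}\{l:l\preceq s\}$: if $k^{-1}(j)=\emptyset$ then $W_j=B_j=C_j\sqcup R_j=C_j\sqcup T_j$; otherwise, for each $s\in k^{-1}(j)$ the inductive hypothesis gives $W_s\subseteq C_s\cup T_s$, while $C_s=B_s\cap B_j\subseteq B_j\subseteq C_j\cup T_j$ (since $R_j\subseteq T_j$) and $T_s\subseteq T_j$, so $W_j=B_j\cup{\bigcup}_{s\in k^{-1}(j)}W_s\subseteq C_j\cup T_j$.

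Now fix $i$ and condition throughout on $X_{C_i}$ together with $\{{\sum}_{l\in T_i}X_l=N_i\}$. Since $\{l:l\preceq i\}=\{i\}\sqcup\bigsqcup_{j\in k^{-1}(i)}\{l:l\preceq j\}$, the product factors as ${\prod}_{l\preceq i}\chi_l(X_{B_l})=\chi_i(X_{C_i},X_{R_i})\,{\prod}_{j\in k^{-1}(i)}g_j$ with $g_j:={\prod}_{l\preceq j}\chi_l(X_{B_l})$, and by (a) each $g_j$ is a function of $(X_{C_j},X_{T_j})$ alone. By the corollary above (with $U'=T_i$, using $C_i\cap T_i=\emptyset$), the conditional law of $X_{T_i}$ given $X_{C_i}$ and $N_i$ is $\mathrm{Mult}(N_i;p_{T_i}/{\sum}_{l\in T_i}p_l)$; applying the aggregation fact to $T_i=R_i\sqcup\bigsqcup_{j\in k^{-1}(i)}T_j$ then yields precisely the conditional laws (\ref{condprob}) for $(M_i,(N_j)_{j\in k^{-1}(i)})$ and, given those, the mutual independence of $X_{R_i}$ and the $X_{T_j}$ with the stated marginals. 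I now apply the tower property, averaging successively over $(M_i,(N_j)_{j\in k^{-1}(i)})$, then over $X_{R_i}$, then over the $X_{T_j}$. After conditioning additionally on $X_{R_i}$ --- hence on $X_{B_i}=(X_{C_i},X_{R_i})$ --- the blocks $X_{T_j}$, $j\in k^{-1}(i)$, remain mutually independent, each still with law $\mathrm{Mult}(N_j;p_{T_j}/{\sum}_{l\in T_j}p_l)$ by the corollary (here one uses that $T_j$ is disjoint from $B_i$, from $R_i$, and from the other $T_{j'}$), so that
\[
 E\Bigl[{\prod}_{j\in k^{-1}(i)}g_j \,\Big|\, X_{B_i},M_i,(N_j)_{j\in k^{-1}(i)}\Bigr]={\prod}_{j\in k^{-1}(i)}E\bigl[g_j \,\big|\, X_{B_i},N_j\bigr].
\]
Since $g_j=g_j(X_{C_j},X_{T_j})$ with $C_j\subseteq B_i$ (so $X_{C_j}$ is determined by $X_{B_i}$) and $X_{T_j}$ has the same conditional $\mathrm{Mult}$ law whether one conditions on $(X_{B_i},N_j)$ or on $(X_{C_j},N_j)$, each factor equals $E[g_j\mid X_{C_j},{\sum}_{l\in T_j}X_l=N_j]=\xi_j(N_j,X_{C_j})$ by definition~(\ref{xi0}). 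Assembling the three levels of the tower gives (\ref{recursive}) in the case $k^{-1}(i)\neq\emptyset$; the case $k^{-1}(i)=\emptyset$ is the special case $T_i=R_i$, $\{l:l\preceq i\}=\{i\}$, in which the corollary directly gives $\xi_i(N_i,X_{C_i})=E[\chi_i(X_{C_i},X_{R_i})\mid X_{C_i},{\sum}_{l\in R_i}X_l=N_i]=E^{X_{R_i}|N_i}[\chi_i(X_{C_i},X_{R_i})]$. Finally, since $k(\cdot)$ strictly increases indices, iterating $k$ from any node terminates at $m$; hence $\{l:l\preceq m\}=\{1,\ldots,m\}$, so $C_m=\emptyset$ and $T_m=\bigsqcup_{l=1}^m R_l=V$ by the Proposition above, and $\xi_m(N,\emptyset)=E\bigl[{\prod}_{i=1}^m\chi_i(X_{B_i})\bigm|{\sum}_{l\in V}X_l=N\bigr]=E\bigl[{\prod}_{i=1}^m\chi_i(X_{B_i})\bigr]$.

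I expect the main obstacle to be the careful handling of the conditioning events in the last paragraph: one must verify that enlarging the conditioning from $(X_{C_i},N_i)$ to also include $X_{R_i}$ and the aggregated counts $(M_i,(N_j)_{j\in k^{-1}(i)})$ does not alter the conditional laws of the blocks $X_{T_j}$ --- which is precisely where the disjointness relations $C_i\cap T_i=\emptyset$, $T_j\cap B_i=\emptyset$ and the mutual disjointness of the $T_j$ (all consequences of the running intersection property, via (a)--(b)) are used --- and that $g_j={\prod}_{l\preceq j}\chi_l(X_{B_l})$ genuinely depends on $(X_{C_j},X_{T_j})$ alone, which is the content of $W_j\subseteq C_j\sqcup T_j$. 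Everything else reduces to the multinomial aggregation identity and iterated conditioning.
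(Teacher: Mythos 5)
Your proof is correct and follows essentially the same route as the paper's: factor $\prod_{l\preceq i}\chi_l$ along the tree defined by $k(\cdot)$, apply the tower property together with the multinomial aggregation/conditional-independence property to obtain the laws (\ref{condprob}), and use an RIP-based combinatorial lemma to reduce the conditioning on $X_{B_i}$ to conditioning on $X_{C_j}$, recovering $\xi_j(N_j,X_{C_j})$. The only (harmless) difference is that you establish the key set identity via $C_j\subseteq B_i$, $T_j\cap B_i=\emptyset$ and $\bigcup_{l\preceq j}B_l\subseteq C_j\sqcup T_j$, i.e.\ the per-index inclusion $B_l\cap B_i\subseteq C_l$, whereas the paper iterates $B_l\cap B_i\subseteq B_{k(l)}\cap B_i\subseteq\cdots\subseteq B_j\cap B_i=C_j$; the two arguments are equivalent, and your write-up is somewhat more explicit about the conditional-distribution bookkeeping (e.g.\ $C_i\cap T_i=\emptyset$).
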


\begin{proof}
We prove the case $k^{-1}(i)\ne\emptyset$.
The case $k^{-1}(i)=\emptyset$ is similar and easier, and therefore omitted.

Noting that $\prod_{j\preceq i} \chi_j(X_{B_j}) = \chi_i(X_{B_i}) \prod_{j\in k^{-1}(i)} \prod_{l\preceq j} \chi_l(X_{B_l})$, we have
\begin{align}
 \xi_i(N_i,X_{C_i}) = E\Bigl[ & \chi_i(X_{B_i}) E\Bigl[ {\prod}_{j\in k^{-1}(i)} {\prod}_{l\preceq j} \chi_l(X_{B_l}) \nonumber \\
 & \mid X_{B_i},{\sum}_{l\in T_j} X_l = N_j\ (j\in k^{-1}(i)),X_{C_i},{\sum}_{j\in T_i} X_j = N_i \Bigr] \nonumber \\
 & \mid X_{C_i},{\sum}_{j\in T_i} X_j = N_i \Bigr].
\label{xi1}
\end{align}
As $B_i\supset C_i,R_i$ and $T_i = R_i \sqcup \bigsqcup_{j\in k^{-1}(i)} T_j$,
the conditions on $X_{B_i}$ and $\sum_{l\in T_j} X_l\,(=N_j)$ implies the conditions on $X_{C_i}$ and $\sum_{j\in T_i} X_j\,(=N_i)$.
That is,
\[
 N_i = M_i + {\sum}_{j\in k^{-1}(i)} N_j, \quad M_i={\sum}_{j\in R_i} X_j.
\]
Therefore, the inner conditional expectation on the right-hand side of (\ref{xi1}) is rewritten as
\[
 E\Bigl[ {\prod}_{j\in k^{-1}(i)} {\prod}_{l\preceq j} \chi_l(X_{B_l})
 \mid X_{B_i},{\sum}_{l\in T_j} X_l = N_j\ (j\in k^{-1}(i)) \Bigr].
\]
Moreover, when ${\sum}_{l\in T_j} X_l\,(=N_j)$ ($j\in k^{-1}(i)$) are given,
$X_{T_j}$ ($j\in k^{-1}(i)$) are independent,
and hence the expression above is rewritten as
\begin{align*}
{\prod}_{j\in k^{-1}(i)} & E\Bigl[ {\prod}_{l\preceq j} \chi_l(X_{B_l})
 \mid X_{B_i},{\sum}_{l\in T_j} X_l = N_j \Bigr] \\
&= {\prod}_{j\in k^{-1}(i)} E\Bigl[ {\prod}_{l\preceq j} \chi_l(X_{B_l})
 \mid X_{C_j},{\sum}_{l\in T_j} X_l = N_j \Bigr] \\
&= {\prod}_{j\in k^{-1}(i)} \xi_i(N_j,X_{C_j}).
\end{align*}
In the above, the second equality follows because of
$(\bigcup_{l\preceq j} B_l) \cap B_i=C_j$ (which will be proved later) and hence
$\prod_{l\preceq j} \chi_l(X_{B_l})$ is a function on $X_{B_i}$ through $X_{C_j}$.
Now, we have the formula
\begin{align*}
 \xi_i(N_i,X_{C_i}) = E\Bigl[ \chi_i(X_{C_i},X_{R_i}) {\prod}_{j\in k^{-1}(i)} \xi_i(N_j,X_{C_j}) \mid X_{C_i},{\sum}_{j\in T_i} X_j = N_i \Bigr].
\end{align*}
Actually this is equivalent to (\ref{recursive}), because under the conditional distribution, $(M_i,(N_j)_{j\in k^{-1}(i)})$ ($M_i=\sum_{j\in R_i}X_j$) and 
$X_{R_i}$ have the distributions given in (\ref{condprob}).

Finally, we prove $(\bigcup_{l\preceq j} B_l) \cap B_i=C_j$ for $i=k(j)$.
Obviously
$(\bigcup_{l\preceq j} B_l) \cap B_i \supset B_j\cap B_i = B_j\cap B_{k(j)} = C_j$.
It suffices to prove
$B_l \cap B_i \subset C_j$ for $l\preceq j$.
Recall that $l\preceq j$ implies $l<k(l)<k(k(l))<\cdots<k(k(\cdots(k(l))\cdots))=j$.
Also $i=k(j)$ implies $j<i$.
Hence, $B_l \cap B_i\subset B_l\cap (\bigcup_{h>l}B_h)=B_{k(l)}$,
$B_l \cap B_i = (B_l \cap B_i)\cap B_i \subset B_{k(l)}\cap B_i=B_{k(k(l))}$.
By repeatedly applying this manipulation, we reach $B_l \cap B_i\subset B_j$ and
$B_l \cap B_i = (B_l \cap B_i) \cap B_i\subset B_j\cap B_i=C_j$ follows.
\end{proof}

Theorem \ref{thm:recursive} provides an algorithm to compute the exact expectation
$E\bigl[{\prod}_{i=1}^m \allowbreak \chi_i(X_{B_i})\bigr]$ with $X_V\sim\mathrm{Mult}(N;p_V)$ by updating the tables $\xi_i(N_i,X_{C_i})$, $i=1,\ldots,m$.
We can evaluate the number of required summations as before.
The result is summarized below without proof.
\begin{theorem}
\label{thm:no_sum}
The number of summations required in the algorithm of Theorem \ref{thm:recursive} is
\begin{equation}
 \sum_{i=1}^{m-1} {N+|B_i|+|k^{-1}(i)| \choose |B_i|+|k^{-1}(i)|} +
 {N+|B_m|+|k^{-1}(m)|-1 \choose |B_m|+|k^{-1}(m)|-1} = O(N^{\mathrm{deg}})
\label{order}
\end{equation}
where
\begin{equation}
\mathrm{deg} = \max\left(\max_{1\le i\le m-1}(|B_i|+|k^{-1}(i)|),|B_m|+|k^{-1}(m)|-1\right).
\label{deg}
\end{equation}
\end{theorem}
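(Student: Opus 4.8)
The plan is to count, for each $i$, the number of summation operations performed while building the table $\xi_i(N_i, X_{C_i})$ via the recursion~(\ref{recursive}), and then add these counts over $i=1,\ldots,m$. The $m=2$ warm-up already exhibits the pattern: producing $\xi_1$ costs $\binom{N+|B_1|}{|B_1|}$ and the outer evaluation costs $\binom{N+|B_2|}{|B_2|}$, so the work attached to node $i$ should be the size of the index range over which its summation runs. First I would isolate, for a fixed $i$ with $k^{-1}(i)\ne\emptyset$, exactly which indices are free. The table entry $\xi_i(N_i, X_{C_i})$ is indexed by a nonnegative integer $N_i$ with $0\le N_i\le N$ and by a nonnegative integer vector $X_{C_i}$ whose sum is $N - N_i$ at most (more precisely, once we sit inside the larger recursion, $N_i$ and $X_{C_i}$ range so that $N_i + \sum_{j\in C_i} X_j \le N$); the new summation done to fill this entry is the conditional expectation $E^{(M_i,(N_j)_{j\in k^{-1}(i)})|N_i}[E^{X_{R_i}|M_i}[\cdot]]$, in which $X_{R_i}$ runs over nonnegative vectors summing to $M_i$ and $(M_i,(N_j)_{j\in k^{-1}(i)})$ runs over nonnegative integer tuples summing to $N_i$. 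So the free coordinates contributing to node $i$ are: the $|C_i|$ coordinates of $X_{C_i}$, the $|R_i|$ coordinates of $X_{R_i}$, and the $|k^{-1}(i)|$ coordinates $(N_j)_{j\in k^{-1}(i)}$ — with the single global linear constraint that everything sums to $N$ (the $M_i$ coordinate is then determined). Using the two stars-and-bars identities quoted in the excerpt,
\[
 \sharp\{x\in\Z_{\ge 0}^d : {\textstyle\sum} x_i \le n\} = \binom{n+d}{d},
\]
the telescoping collapse seen in the $m=2$ case — where $\sum_{M_1}\binom{N-M_1+|C_1|}{|C_1|}\binom{M_1+|R_1|-1}{|R_1|-1}$ equals $\binom{N+|C_1|+|R_1|}{|C_1|+|R_1|} = \binom{N+|B_1|}{|B_1|}$ — now occurs with one extra batch of free coordinates, namely the $|k^{-1}(i)|$ variables $(N_j)$, so the count for node $i<m$ becomes $\binom{N+|C_i|+|R_i|+|k^{-1}(i)|}{|C_i|+|R_i|+|k^{-1}(i)|} = \binom{N+|B_i|+|k^{-1}(i)|}{|B_i|+|k^{-1}(i)|}$, using $|B_i|=|C_i|+|R_i|$.

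Next I would handle the root node $i=m$ separately. Here $C_m=\emptyset$ and there is no outer $N_m$ to range over: $\xi_m(N,\emptyset)$ is a single number, and the summation that produces it runs over $X_{R_m}$ summing to $M_m$ and over $(N_j)_{j\in k^{-1}(m)}$ with $M_m+\sum_j N_j = N$. That is $|R_m| + |k^{-1}(m)| = |B_m| + |k^{-1}(m)|$ free coordinates under a single equality constraint, which by the first stars-and-bars identity gives $\binom{N+|B_m|+|k^{-1}(m)|-1}{|B_m|+|k^{-1}(m)|-1}$ summations — matching the second term of~(\ref{order}). Summing the node-$i$ counts for $i=1,\ldots,m-1$ and adding the root term yields the left-hand side of~(\ref{order}) exactly. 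The $O(N^{\mathrm{deg}})$ bound on the right then follows because each binomial $\binom{N+a}{a}$ is a polynomial in $N$ of degree $a$, the finite sum of polynomials has degree equal to the maximum of the degrees, and $\mathrm{deg}$ in~(\ref{deg}) is precisely that maximum: $\max_i (|B_i|+|k^{-1}(i)|)$ for $i<m$ versus $|B_m|+|k^{-1}(m)|-1$ for the root.

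I anticipate the main obstacle is bookkeeping the ranges correctly rather than any deep difficulty: one must verify that when $\xi_i$ is consumed inside $\xi_{k(i)}$, the value $N_i$ is a free summation variable of the \emph{parent} and not double-counted, so that each summation is charged to exactly one node. The clean way to see this is that the residual sets $R_i$ partition $V$ (the Proposition before Theorem~\ref{thm:recursive}), and the recursion visits each $T_i$ along the tree defined by $k(\cdot)$; at node $i$ the genuinely new summation indices are $X_{R_i}$ (the coordinates first ``activated'' at $i$), the incoming partial sums $(N_j)_{j\in k^{-1}(i)}$ (each the output slot of a child, summed over once at the parent), and the conditioning pattern $X_{C_i}$ (inherited, ranging freely subject to the global total). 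A careful statement of this partition of work, together with the telescoping identity, is all that is needed, and since the excerpt explicitly says the result is ``summarized below without proof,'' I would present the argument at the level of detail above — identifying the free coordinates at each node, invoking the two binomial identities, and collapsing the sum — rather than writing out every intermediate algebraic manipulation.
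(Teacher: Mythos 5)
Your proposal is correct and follows exactly the route the paper itself indicates (the theorem is stated ``without proof,'' with the remark that the count is obtained ``as before''): charge to each node $i$ the product of the table size over $(N_i,X_{C_i})$ and the inner summation range over $(M_i,(N_j)_{j\in k^{-1}(i)})$ and $X_{R_i}$, collapse via the stars-and-bars convolution to $\binom{N+|B_i|+|k^{-1}(i)|}{|B_i|+|k^{-1}(i)|}$ for $i<m$ and to the exact-sum count $\binom{N+|B_m|+|k^{-1}(m)|-1}{|B_m|+|k^{-1}(m)|-1}$ at the root, and read off the polynomial degree. The only blemish is the parenthetical claim that the free coordinates at a non-root node ``sum to $N$'' --- the correct constraint is $N_i+\sum_{j\in C_i}X_j\le N$, i.e.\ a sum at most $N$, which is what your subsequent use of the $\binom{n+d}{d}$ identity (rather than $\binom{n+d-1}{d-1}$) in fact assumes.
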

Note that when $i<m$, the value $\xi_i(N_i,X_{C_i})$ is needed for $0\le N_i\le N$.
Whereas, when $i=m$, only the value for $N_i=N$ is needed.
This is the reason why the case $i=m$ is exceptional in (\ref{order}).

The number of summations (\ref{order}) is smaller than in the absent of this recursive computation technique:
\[
 {N+|V|-1 \choose |V|-1} = O(N^{|V|-1}), \quad V=\bigcup_{i=1}^m B_i.
\]

As shown, the proposed algorithm has an advantage in time complexity.
Whereas, it requires memory space to restore the tables.
Since in (\ref{xi0}), $N_i$ and the elements of $X_{C_i}=(X_i)_{i\in C_i}$ are arbitrary nonnegative integers such that $N_i+\sum_{i\in C_i}X_i\le N$, the size of the table for $\xi_i(N_i,X_{C_i})$ is
\[
 {N+|C_i|+1 \choose |C_i|+1} = O(N^{|C_i|+1}).
\]
However, in the process of computation of (\ref{recursive}) for $i=1,\ldots,m$, the table $\xi_i(N_i,X_{C_i})$ can be deleted (i.e., the memory space can be released) once $\xi_{k(i)}(N_{k(i)},X_{C_{k(i)}})$ is computed.
Therefore, the problem of space complexity does not matter in practice.

\subsection{A class of distributions for recursive computation}

Although Theorem \ref{thm:recursive} is stated for the multinomial distribution,
it also works for a class of distributions including the multinomial distribution.

Let $B_1$ and $B_2$ be index sets such that $V=B_1\cup B_2$,
and let $C_1 = B_1 \cap B_2$, $R_1 = B_1 \setminus C_1$ and $R_2=B_2$ again.
Suppose that $X_i$ is distributed independently for $i\in V$ according to a certain distribution.
Under the conditional distribution where $N=\sum_{i\in V}X_i$ is given,
if we pose additional conditions that $(\sum_{i\in R_2}X_i,\sum_{i\in R_1}X_i)=(M_2,M_1)$ is given, then $X_{R_1}$ and $X_{R_2}$ become independent.
Therefore, the three steps (\ref{M2M1})--(\ref{XR1}) for generating random numbers,
and the corresponding decomposition of the expectation continue to hold for an arbitrary distribution of $X_i$.
If explicit expressions for probability density functions of the conditional distributions
$X_V|N$, $X_{R_2}|M_2$, $X_{R_1}|M_1$, and $(M_2,M_1)|N$ are available,
we have the the computation formula of the type (\ref{E}).

In general, if some explicit formula is available for the probability density function of the conditional distribution
\begin{equation}
\label{generalcond}
 \Bigl({\sum}_{j\in R_i}{X_j}\Bigr)_{i\ge 1}\,\Big|_{N},
\end{equation}
where the $R_i$ are subsets of $V$ such that $V=\bigsqcup_{i\ge 1}R_i$,
we can construct the recurrence computation formula of the type of Theorem \ref{thm:recursive}.

The class of distributions having the explicit conditional density function of (\ref{generalcond}) includes
the normal distribution, the Gamma distribution, the binomial distribution and the negative binomial distribution. 
The conditional distributions of (\ref{generalcond}) corresponding to the above four distributions are the (degenerate) normal distribution, the Dirichlet distribution, the multivariate hypergeometric distribution, and the Dirichlet-multinomial distribution, respectively.
For these distributions, the recurrence computation formula of Theorem \ref{thm:recursive} works by replacing summations with integrations when the distribution is continuous.

\section{Extraction of Markov structure}
\label{sec:markov}

As shown in the previous section, when the sequence $Z_1,\ldots,Z_M$ of subsets of $V$ has the running intersection property, the expectation (\ref{expectation}) can be evaluated efficiently with the recursive summation/integration technique proved in Theorem \ref{thm:recursive}.
However, in general, $Z_1,\ldots,Z_M$ does not have this property.
To apply this technique to general cases, one method is to prepare another sequence $B_1,\ldots,B_m$ having the running intersection property such that each $Z_j$ is included into at least one of $B_1,\ldots,B_m$.
If we have such $B_i$, by defining a function
$\tau:\{1,\ldots,M\}\to\{1,\ldots,m\}$ so that $Z_j \subset B_{\tau(j)}$,
the expectation (\ref{expectation}) is written as
\[
 E\Bigl[{\prod}_{i=1}^m \chi_i(X_{B_i})\Bigr], \quad
 \chi_i(X_{B_i}) := {\prod}_{j\in \tau^{-1}(i)}\chi_j(X_{Z_j}),
\]
which can be dealt with by Theorem \ref{thm:recursive}.

Such a sequence $B_i$, $i=1,\ldots,m$, is known to be obtained through a chordal extension in Algorithm \ref{alg:chordal_extension}.
Here, we summarize some notions and basic facts on chordal graphs.

Let $G = (V,E)$ be a connected undirected graph, where $V$ is a set of
vertices and $E \subset V \times V$ is a set of edges.
$G$ is called complete if every pair of vertices is joined by an edge. 
For a subset $B \subseteq V$, let $G(B)$ denote a subgraph induced by $B$. 
That is,
\[
 G(B) = (B,E(B)),\quad E(B) = \{(i,j)\in E \mid i,j\in B\}.
\]
When $G(B)$ is complete, $B$ is called a clique. 
A clique that is maximal with respect to inclusion is called a maximal clique.
$G$ is called chordal if every cycle with length greater than three has a chord.  

Let $B_1,\ldots,B_m$ be a sequence of all maximal cliques of $G$ satisfying the running intersection property (\ref{rip}) in Definition \ref{def:rip}.
The sequence is called perfect if for every $i=1,\ldots,m$, $B_i \cap B_{k(i)}$ is a clique of $G$.
Denote by ${\cal B}=\{B_1,\ldots,B_m\}$ the set of maximal cliques of $G$.
A vertex $v \in V$ is called simplicial if its adjacent vertices form a clique in $G$.
A perfect elimination ordering $v_1,\ldots,v_n$ of $G$ is an ordering of vertices of $G$ such that for every $i$, $v_i$ is a simplicial in $G(\{v_{i+1},\ldots,v_n\})$. 

Let $\mathrm{Simp}(B_i)$ denote the set of simplicial vertices of $G$ in $B_i$.
Then $B_i$ is called a boundary clique if there exists $B_j$, $j\ne i$, such that 
\[
B_i \cap (V \setminus \mathrm{Simp}(B_i)) = B_i \cup B_j.
\]

The following proposition on the property of chordal graphs is well known
 (e.g., \cite{Lauritzen96} and \cite{Buneman74}). 
\begin{proposition}
Let $G$ be an undirected graph.
The four statements below are equivalent.

(i) $G$ is chordal.

(ii) $G$ has a perfect sequence of the maximal cliques.

(iii) $G$ vertices can be ordered to have a perfect elimination ordering.
\end{proposition}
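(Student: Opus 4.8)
The plan is to prove the cycle of equivalences through statement (iii), establishing (i) $\Leftrightarrow$ (iii) and (ii) $\Leftrightarrow$ (iii). The implication (iii) $\Rightarrow$ (i) is the quick one: given a perfect elimination ordering $v_1,\ldots,v_n$, suppose for contradiction that $G$ has a cycle of length at least four with no chord, and let $v_i$ be the cycle vertex of smallest index. Its two neighbours on the cycle both have index larger than $i$ and are both adjacent to $v_i$; since $v_i$ is simplicial in the subgraph induced by $\{v_i,\ldots,v_n\}$, those two neighbours must be adjacent to each other, which is a chord of the cycle --- a contradiction.

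For (i) $\Rightarrow$ (iii) I would induct on $|V|$, the substance of the inductive step being \emph{Dirac's lemma}: a chordal graph which is not complete has a simplicial vertex. To prove this, fix two non-adjacent vertices $a,b$ and a minimal $a$--$b$ vertex separator $S$. First I would show $S$ is a clique: if $x,y\in S$ were non-adjacent, minimality of $S$ gives a path from $x$ to $y$ through the component of $G\setminus S$ containing $a$ and another through the one containing $b$, and a shortest such cycle is chordless of length at least four, contradicting chordality. Then, writing $A$ and $B$ for those two components, the induced subgraphs $G(A\cup S)$ and $G(B\cup S)$ are chordal and strictly smaller, so by induction each has a simplicial vertex; since $S$ is a clique and every vertex of $A$ (resp.\ $B$) has all its neighbours inside $A\cup S$ (resp.\ $B\cup S$), one can choose these simplicial vertices in $A$ and in $B$, and they are then simplicial in $G$ and non-adjacent. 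With a simplicial vertex available, remove it, note that an induced subgraph of a chordal graph is chordal, recurse, and read off the perfect elimination ordering from the order of removal.

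The two implications joining (ii) and (iii) are routine. For (iii) $\Rightarrow$ (ii), from a perfect elimination ordering $v_1,\ldots,v_n$ form $D_i=\{v_i\}\cup\{v_j:\ j>i,\ (v_i,v_j)\in E\}$, which is a clique because $v_i$ is simplicial in the subgraph induced by $\{v_i,\ldots,v_n\}$; every maximal clique is some $D_i$ (take its lowest-indexed vertex), and ordering the maximal cliques by the index of their lowest-indexed vertex yields a perfect sequence satisfying the running intersection property --- this is the classical construction, cf.\ \cite{Lauritzen96}. For (ii) $\Rightarrow$ (iii), induct on $m$: by the running intersection property the residual set $R_1=B_1\setminus B_{k(1)}$ consists exactly of the vertices lying in $B_1$ and in no other maximal clique, hence simplicial; deleting $R_1$ leaves a graph whose maximal cliques are $B_2,\ldots,B_m$, a sequence that still satisfies the running intersection property, so the induction hypothesis furnishes a perfect elimination ordering of the remainder, which we extend by listing the vertices of $R_1$ first, in any order.

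The step I expect to be the real obstacle is Dirac's lemma inside (i) $\Rightarrow$ (iii): both the sub-claim that a minimal vertex separator of a chordal graph is a clique, and the care needed to place the inductively produced simplicial vertices in the interiors of the two sides of the separator (so that they remain simplicial in the full graph), are where the argument must be made precise; the remaining implications are either a one-line cycle-chasing argument or a standard manipulation of cliques and orderings.
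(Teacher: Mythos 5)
The paper itself offers no proof of this proposition: it is quoted as a known fact with citations to \cite{Lauritzen96} and \cite{Buneman74}, so you are supplying the classical argument rather than paralleling anything in the text. Your (iii)$\Rightarrow$(i) step and your (ii)$\Rightarrow$(iii) step are sound, and (i)$\Rightarrow$(iii) is the standard Dirac argument; note only that to place the inductively produced simplicial vertices inside $A$ and $B$ you must strengthen the induction hypothesis to ``every chordal graph has a simplicial vertex and, if it is not complete, two non-adjacent ones'' --- with the weak form alone the simplicial vertex of $G(A\cup S)$ may land in $S$. You flagged this yourself, and the strengthened statement is the standard repair.

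The genuine gap is in (iii)$\Rightarrow$(ii): ordering the maximal cliques by their lowest-indexed vertex with respect to an arbitrary perfect elimination ordering does \emph{not} in general give the running intersection property. Take $V=\{1,\ldots,6\}$ with edges $(1,5),(1,6),(2,5),(2,6),(5,6),(3,5),(4,6)$; the identity ordering is a perfect elimination ordering, and the maximal cliques are $\{1,5,6\},\{2,5,6\},\{3,5\},\{4,6\}$ with lowest vertices $1,2,3,4$. In that order, $B_2\cap(B_3\cup B_4)=\{5,6\}$ is contained in neither $\{3,5\}$ nor $\{4,6\}$, so the RIP of Definition \ref{def:rip} fails at $i=2$. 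Reversing the order does not rescue the recipe: for the triangle $\{3,4,5\}$ with pendant edges $(1,4),(2,5)$ (identity ordering again a perfect elimination ordering), the decreasing order $\{3,4,5\},\{2,5\},\{1,4\}$ fails at the first clique, whose intersection $\{4,5\}$ with the union of the later cliques lies in no single one of them. The classical constructions tie the clique order to the discovery order at the top of the elimination scheme, not to each clique's earliest vertex: \cite{Lauritzen96} orders the cliques by the maximum cardinality search number of the vertex at which each clique is discovered (the vertex ordering there is generated by MCS, not arbitrary), and this paper's own Step 2 with Theorem \ref{thm:hara} orders the maximal cliques lexicographically from their largest vertices, which is exactly what you can invoke for this direction; equivalently, take a rooted clique tree and list the cliques so that every clique precedes its parent, matching the paper's reversed RIP convention. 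With that replacement, and the separators being automatically complete because they are subsets of cliques, the equivalence chain closes.
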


We generate a perfect sequence $B_1,\ldots,B_m$ from $Z_1,\ldots,Z_M$ according to the following procedure.
\begin{algorithm}\ %
\label{alg:chordal_extension}
\begin{itemize}
\item[Step 0.]
Define an undirected graph $G=(V,E)$ with vertices $V=\{1,\ldots,n\}$ and edges
\[
 E = \{ (i,j)\in V\times V \mid i,j\in Z \ \mbox{for some}\ Z\in\mathcal{Z} \},
\]
where $\mathcal{Z}=\{ Z_1,\ldots,Z_M \}$.

\item[Step 1.]
Add edges of $E_1$ to $G$ so that the extended graph $\widetilde G=(V,E\cup E_1)$ is a chordal graph.

\item[Step 2.]
Identify the perfect sequence of the maximal cliques $B_1,\ldots,B_m$ of $\widetilde G$.
This sequence has the running intersection property and $V=\bigcup_{i=1}^m B_i$.
\end{itemize}
\end{algorithm}

The procedure for Step 1 is referred to as the \textit{chordal extension\/}.
Constructing the chordal extension such that the maximum size of the maximal clique is minimum is known to be a NP-hard problem \citep{Fukuda-etal00}.
In this section, we propose a heuristic method to construct an approximately best chordal extension.
For Step 2, we provide a method proved in Theorem \ref{thm:hara}.
The other method for the same purpose based on the maximum cardinality search procedure is also known \citep{Blair-Peyton93}.

We now explain Steps 0--2 in detail using a small example.

\subsubsection*{Step 0: Defining an undirected graph}

Suppose that the vertex set is $V=\{1,2,\ldots,9\,(=n)\}$, and the family of subsets of $V$ is given by
\begin{align}
\mathcal{Z} = \Bigl\{ &
\{1\}, \{2\}, \{3\}, \{4\}, \{5\}, \{6\}, \{7\}, \{8\}, \{9\},
 \nonumber \\
& \{4,5\}, \{7,8\}, \{4,8\}, \{3,7\}, \{4,5,8\}, \{2,4\}, \nonumber \\
& \{1,3\}, \{2,3\}, \{2,4,5\}, \{3,6\}, \{8,9\} \Bigr\}.
\label{Z20}
\end{align}
The associated undirected graph $G$ is shown in Figure \ref{fig:G} (left).

\subsubsection*{Step 1-a: Renumbering of vertices}

Step 1 is divided into three parts.
In this substep, we renumber the vertices.
Consider the following procedure to make all vertices removed from the graph $G$ sequentially:
First, find a vertex, say $v_1$, that has the minimum number of adjacent edges in the graph $G$.
Then, remove $v_1$ as well as its adjacent edges $E_1 = \{ (v_1,i) \in E \}$ from the graph $G$.
Then, find a vertex, say $v_2$, that has the minimal number of adjacent edges in the graph $(V\setminus\{v_1\},E\setminus E_1)$ with $n-1$ vertices.
Continuing with this procedure, we have a sequence of vertices $v_1,v_2,\ldots,v_n$.
This renumbering is called the minimum degree (MD) ordering \citep{Parter61}.

The MATLAB function {\tt symamd} is available to obtain an MD ordering \citep{Redfern-Campbell98}.
In our example,
\[
\begin{pmatrix} i \\ v_i \end{pmatrix} =
\begin{pmatrix}
 1 & 2 & 3 & 4 & 5 & 6 & 7 & 8 & 9 \\
 9 & 1 & 6 & 3 & 7 & 2 & 4 & 5 & 8
\end{pmatrix}.
\]
We illustrate the resulting renumbered graph in Figure \ref{fig:G} (right).

\begin{figure}[H]
\begin{center}
\begin{tabular}{lll}
 \includegraphics[width=0.4\linewidth]{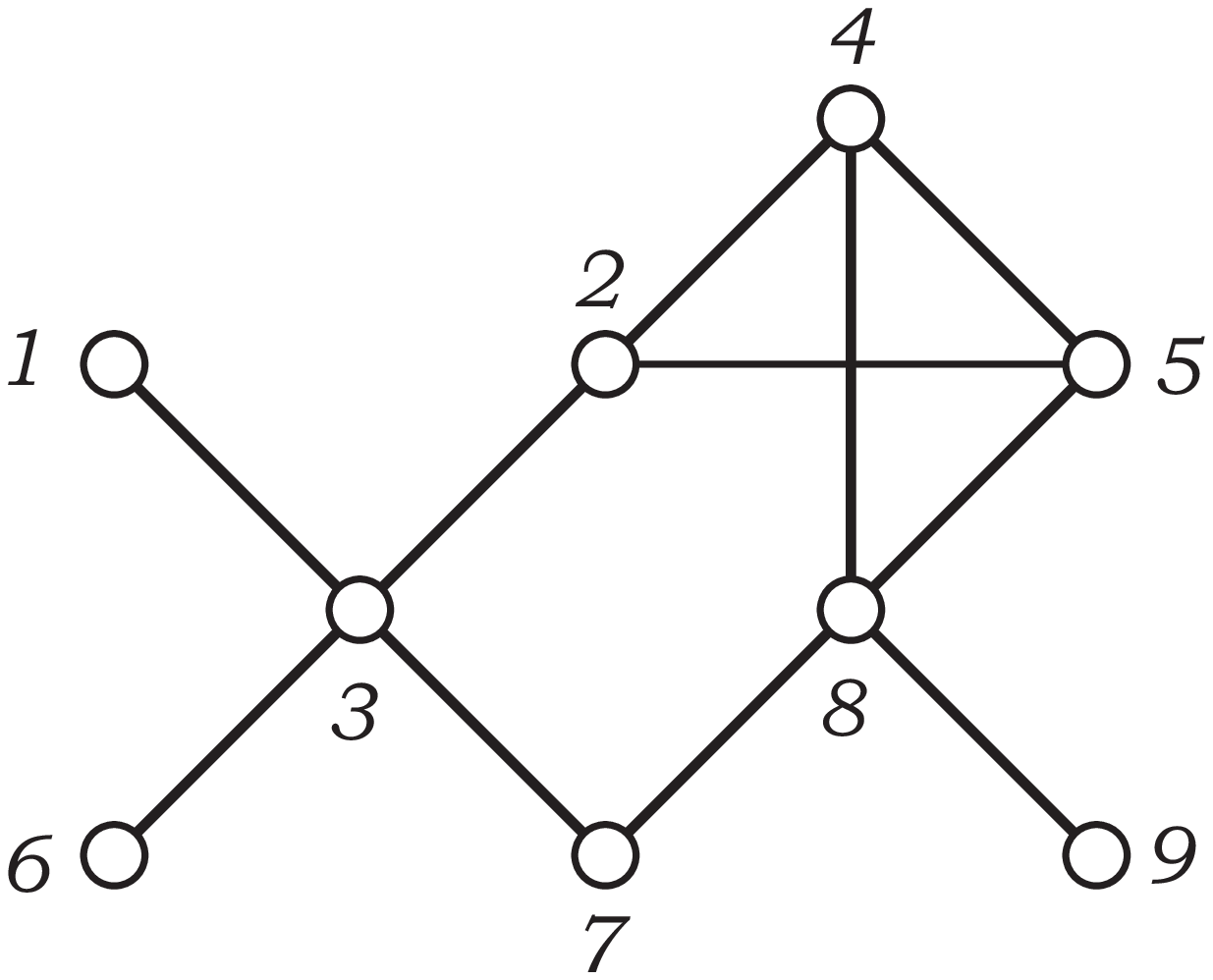} & &
 \includegraphics[width=0.4\linewidth]{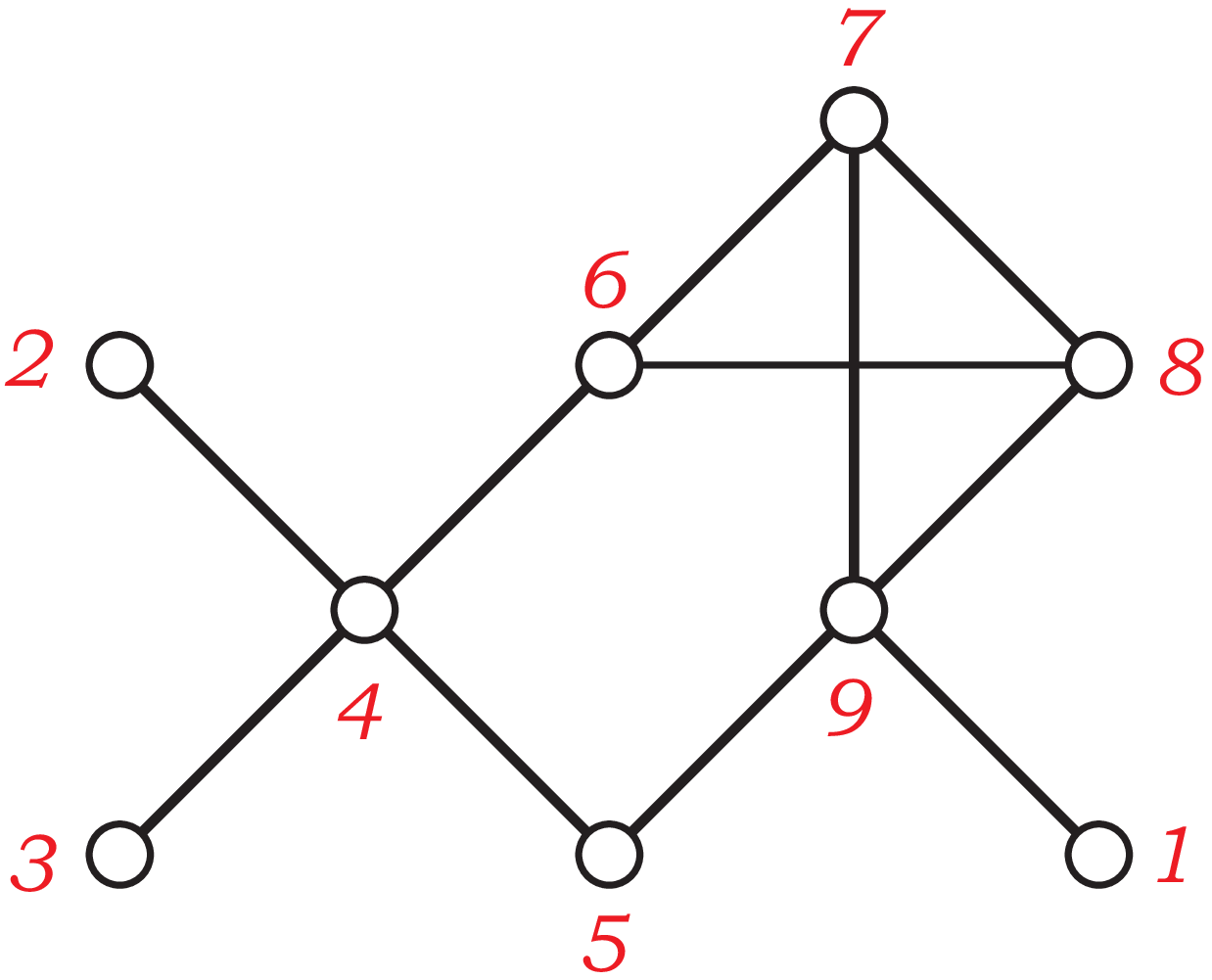} \\
\end{tabular}
\end{center}
\vspace{-0.3cm}
\caption{Graph $G$
 (left: original, right: after renumbering vertices).}
\label{fig:G}
\end{figure}

\subsubsection*{Step 1-b: Triangularization}

In this substep, we add additional edges to the graph $G$. 
For $i=1,\ldots,n$, if there exist $j,k$ such that $i<j<k$,
 $(i,j)\in E$ and $(i,k)\in E$, then add the edge $(j,k)$ to $G$ (if it does not exist).
This procedure is referred to as \textit{triangularization\/}.
Let $\widetilde G=(V,\widetilde E)$ be the resulting graph after the triangularization process.
From this construction, for each $i$, the set of vertices
\begin{equation}
\label{simplicial}
 V_i = \Bigl\{ j\in V \mid j>i,\ (i,j)\in \widetilde E \Bigr\}\cup\{i\}
\end{equation}
forms a clique.
That is, the vertices are ordered to be a perfect elimination ordering.
Figure \ref{fig:G-chordal} (left) depicts the triangulated graph $\widetilde G$ with added edges in red.

In our example, two edges $(5,6)$ and $(6,9)$ are added.
$V_i$ in (\ref{simplicial}) are
\begin{align*}
V_1 &=\{1,9\},\ V_2=\{2,4\},\ V_3=\{3,4\},\ V_4 =\{4,5,6\},\\
V_5 &=\{5,6,9\},\ V_6=\{6,7,8,9\},\ V_7=\{7,8,9\},\ V_8=\{8,9\},\ V_9=\{9\}.
\end{align*}

\subsubsection*{Step 1-c: Finding maximal cliques}

Let $V_1,\ldots,V_n$ be the sequence of cliques defined in (\ref{simplicial}).
From these $n$ cliques, we remove all ``non-maximal'' $V_i$ such that 
\[
 V_i \subset V_l \quad\text{for some $l<i$}.
\]

In our example, the maximal cliques are
\begin{align*}
V_1 &=\{1,9\},\ V_2=\{2,4\},\ V_3=\{3,4\},\ V_4 =\{4,5,6\},\\
V_5 &=\{5,6,9\},\ V_6=\{6,7,8,9\}.
\end{align*}

\subsubsection*{Step 2: Identifying a perfect sequence of the maximal cliques}

Suppose that there are $m$ ($\le n$) remaining cliques after removing the non-maximal $V_i$.
Among the set of maximal cliques, we introduce an order defined below.
For two cliques $V=\{v_1,\ldots,v_l\}$ and $V'=\{v'_1,\ldots,v'_{l'}\}$
such that $v_1<\cdots<v_l$ and $v'_1<\cdots<v'_{l'}$, define a lexicographic order:
\begin{equation}
\begin{split}
V<V' \quad \mbox{iff}\quad
               & v_l<v_{l'} \\
&\text{or}\ \  v_l=v_{l'},\ v_{l-1}<v_{l'-1} \\
&\text{or}\ \  v_l=v_{l'},\ v_{l-1}=v_{l'-1}, v_{l-2}<v_{l'-2} \\
&\text{or}\ \  \cdots.
\end{split}
\label{lex}
\end{equation}
This comparison procedure should stop properly if one of two is not a proper subset of the other.
According to this order, we order the $m$ maximal cliques $B_1,\ldots,B_m$, which is shown to have the running intersection property by Theorem \ref{thm:hara} below.

The function $k(\cdot)$ is defined as
\begin{equation}
 \label{k(i)}
 k(i) = \min\Bigl\{ k>i \mid B_i\cap\Bigl({\bigcup}_{j>i} B_j\Bigr) \subset B_k \Bigr\}.
\end{equation}

In our example, we obtain the final results
\begin{align}
B_1 &= \{2,4\},\ B_2=\{3,4\},\ B_3=\{4,5,6\},\ B_4 =\{1,9\}, \nonumber \\
B_5 &= \{5,6,9\},\ B_6=\{6,7,8,9\},
\label{Bi}
\end{align}
\[
\begin{pmatrix} i \\ k(i) \end{pmatrix} =
\begin{pmatrix}
 1 & 2 & 3 & 4 & 5 & 6 \\
 2 & 3 & 5 & 5 & 6 & -
\end{pmatrix},
\]
or
\begin{align}
 & k^{-1}(1)=\emptyset,\ k^{-1}(2)=\{1\},\ k^{-1}(3)=\{2\},\ k^{-1}(4)=\emptyset, \nonumber \\
 & k^{-1}(5)=\{3,4\},\ k^{-1}(6)=\{5\}
\label{ki}
\end{align}
(see Figure \ref{fig:G-chordal} (right)).

\begin{figure}[H]
\begin{center}
\begin{tabular}{lll}
 \includegraphics[width=0.4\linewidth]{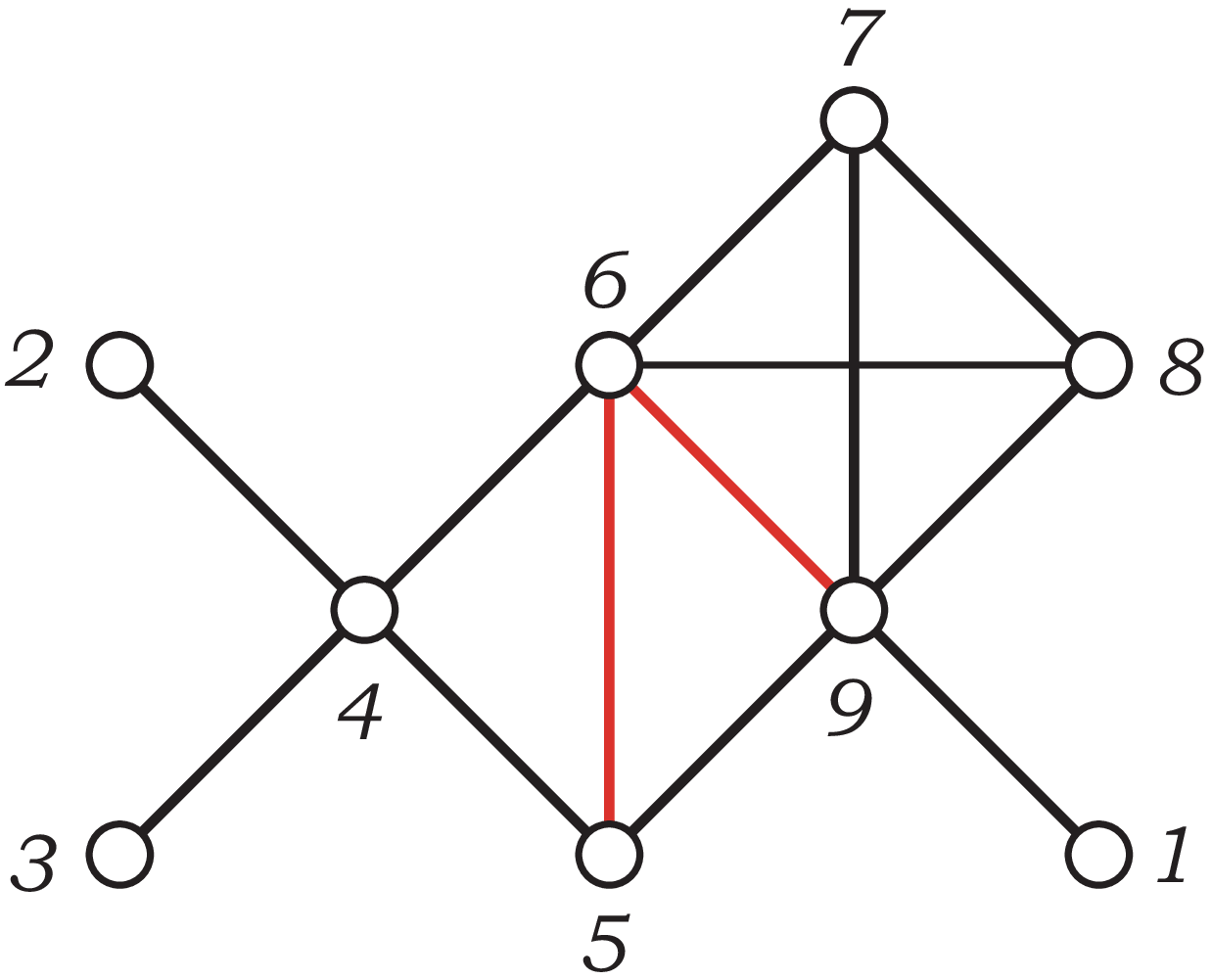} & &
 \includegraphics[width=0.4\linewidth]{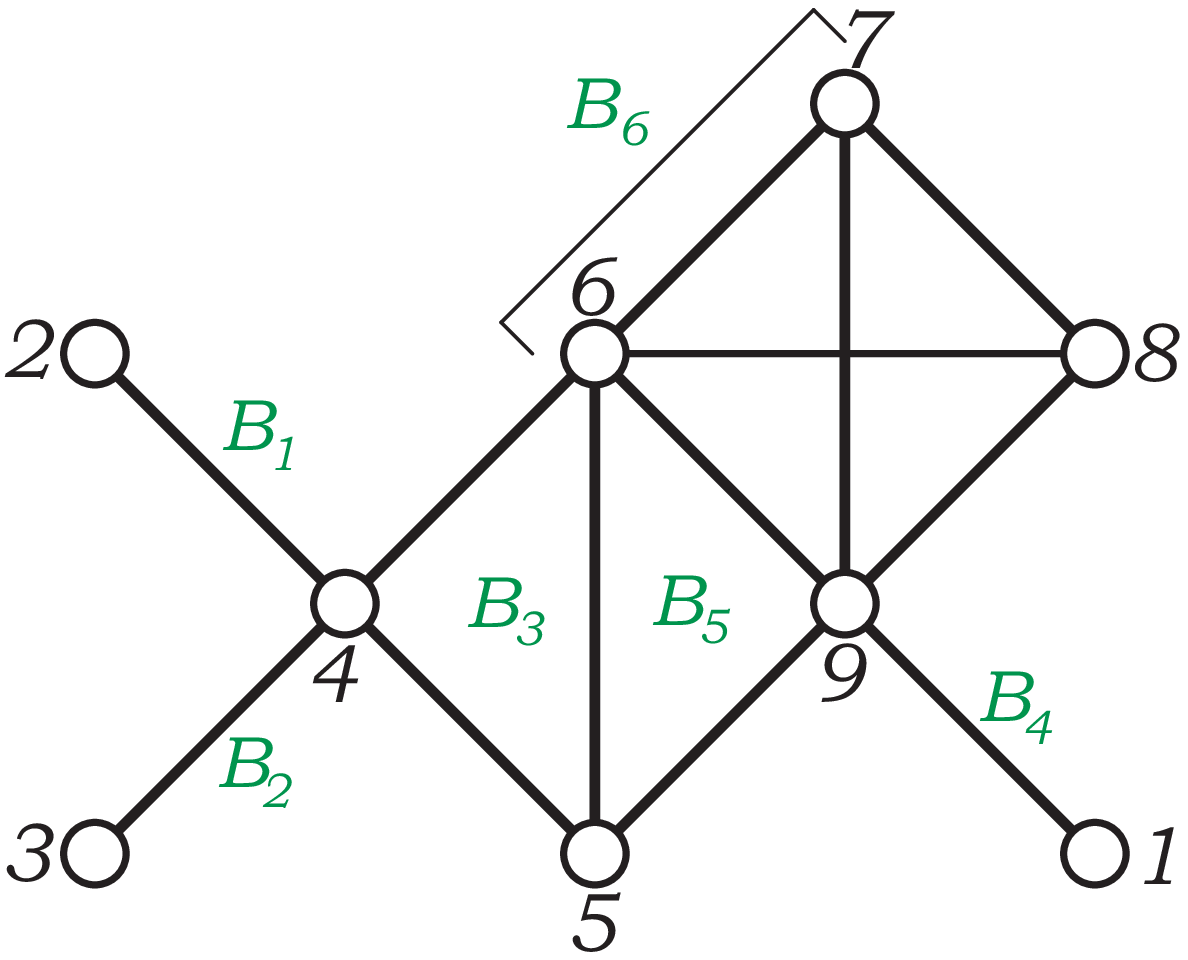} \\
\end{tabular}
\end{center}
\vspace{-0.3cm}
\caption{Chordal graph $\widetilde G$
 (left: added edges are in red, right: maximal cliques are in green).}
\label{fig:G-chordal}
\end{figure}

The required number of summations is estimated by Theorem \ref{thm:no_sum}.
In our example, $B_i$ and $k^{-1}(i)$ are given in (\ref{Bi}) and (\ref{ki}).
For example, when $N=\sum_i X_i = 28$, there are 314,621 summations using the proposed method, and 30,260,340 summations when we do not.

The theorem below validates Step 2 of the proposed algorithm.

\begin{theorem}
\label{thm:hara}
The sequence of the maximal cliques $B_1,\ldots,B_m$ identified in Step 2 has the running intersection property.
That is, if the partial order (\ref{lex}) can be embedded into the total order $B_1,\ldots,B_m$, then
$B_i \cap \bigl(\bigcup_{j>i} B_j\bigr) \subset B_{k(i)}$ for $k(i)>i$.
\end{theorem}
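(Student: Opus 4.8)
\noindent\textit{Proof strategy.}\quad
The plan is to relabel $V$ along the perfect elimination ordering produced in Step~1-b, so that $1,\dots,n$ is a perfect elimination ordering of $\widetilde G$ and the sets $V_i$ of (\ref{simplicial}) are cliques with $\min V_i=i$. The first step is the normal form: every maximal clique $B$ of $\widetilde G$ equals $V_{\min B}$, because if $v=\min B$ then every other vertex of $B$ exceeds $v$ and is adjacent to it, so $B\subseteq V_v$, and maximality gives equality. Hence distinct maximal cliques have distinct minima, no maximal clique is a proper subset of another, the lexicographic rule (\ref{lex}) never stalls, and $B_1,\dots,B_m$ really is a total order — so the hypothesis of the theorem is automatic.

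The second step is to extract the Markov structure at the level of the \emph{full} list $V_1,\dots,V_n$ ordered by index: since $i\notin V_j$ for $j>i$ while $x\in V_x$ for each $x\in V_i\setminus\{i\}$, one has $V_i\cap\bigcup_{j>i}V_j=V_i\setminus\{i\}$, a clique, and writing $x^\ast=\min(V_i\setminus\{i\})>i$ every element of it is a higher-indexed neighbour of $x^\ast$, so $V_i\setminus\{i\}\subseteq V_{x^\ast}$. I would then prove the key localisation lemma about the lex order: \emph{$V_v$ is the lexicographically largest maximal clique containing $v$, for every vertex $v$}. The reason is that a maximal clique $B'\ni v$ with $B'\ne V_v$ has $\min B'<v$, and — being a clique through $v$ — all of its vertices above $v$ already lie in $V_v$; comparing the two vertex tuples from the top, $V_v$ either strictly wins at the first position where they differ or would otherwise be a proper subset of $B'$, which is impossible. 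In particular, taking $v=\min B_i$, no $B_j$ with $j>i$ contains $v$, so the separator $S_i:=B_i\cap\bigcup_{j>i}B_j$ is contained in $V_v\setminus\{v\}$ and is itself a clique.

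It then remains to show that $S_i$ is contained in a single $B_k$ with $k>i$; this is exactly the running intersection property, $k(i)$ being the least such index, in agreement with (\ref{k(i)}). Since $S_i$ is a clique with $v\notin S_i$, it lies in $V_{x^\ast}$ with $x^\ast=\min S_i>v$, and the task is to exhibit a maximal clique containing $V_{x^\ast}$ that falls \emph{after} $B_i$ in the lex order rather than before it; equivalently, I would show that $B_1,\dots,B_m$ is a linear extension of a clique tree of $\widetilde G$ rooted at $B_m$ with $k(\cdot)$ as the parent map, after which the running intersection property is the standard equivalence for perfect/RIP clique sequences \citep{Blair-Peyton93, Lauritzen96}. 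I expect this final reduction to be the main obstacle: the normal form, the raw property, and the localisation lemma are essentially bookkeeping with the elimination ordering, whereas showing that the ``compare from the largest vertex'' rule (\ref{lex}) always places the covering clique among the \emph{later} ones — so that $k(i)>i$ as required — is where the genuine combinatorics of chordal overlap structure enters.
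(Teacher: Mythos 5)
Your preparatory steps are fine as far as they go: every maximal clique $B$ of $\widetilde G$ equals $V_{\min B}$ with $V_i$ as in (\ref{simplicial}); the list $V_1,\ldots,V_n$ ordered by index has the running intersection property; and $V_v$ is the lex-largest maximal clique containing $v$ when $v$ is the minimum of a maximal clique (as literally stated, ``for every vertex $v$,'' the lemma misfires when $V_v$ is not maximal, e.g.\ $v=9$ in the worked example, but that is not where you use it). The trouble is that none of this reaches Theorem \ref{thm:hara}, because Step 2 orders the cliques by the rule (\ref{lex}), which compares from the \emph{largest} vertices downward and in general differs from the ordering by minimal vertex: in the paper's own example $\{1,9\}=V_1$ is first in the index order but is $B_4$ in the lex order, so the RIP you establish for the index order does not transfer. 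The single claim that would finish the argument --- that the separator $S_i=B_i\cap\bigcup_{j>i}B_j$ lies inside one maximal clique that comes \emph{later} under (\ref{lex}), so that $k(i)>i$ in (\ref{k(i)}) --- is exactly the step you defer, and the reformulation you offer (``$B_1,\ldots,B_m$ is a linear extension of a clique tree rooted at $B_m$ with $k(\cdot)$ as parent map'') is not an available black box: it is a restatement of the theorem. As written, the proposal is a strategy with the essential combinatorial step missing, as you yourself acknowledge.

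For comparison, the paper closes precisely this gap by induction on $m$, working at the lex-\emph{smallest} clique rather than at a generic $B_i$. If no single later clique absorbed $C_1=B_1\cap(B_2\cup\cdots\cup B_m)$, then $C_1$ would be covered by several cliques $H_1,\ldots,H_t$ with distinct traces on $B_1$, every vertex of $C_1$ would be non-simplicial in $\widetilde G(B_1\cup H_1\cup\cdots\cup H_t)$, and this forces some $H_i$ all of whose vertices outside $C_1$ precede every vertex of $C_1$ in the vertex numbering; but then $H_i<B_1$ under (\ref{lex}), a contradiction. Hence $B_1$ is a boundary clique, deleting $\mathrm{Simp}(B_1)$ leaves a connected chordal graph whose maximal cliques are exactly $B_2<\cdots<B_m$ \citep{Hara-Takemura06}, and the induction proceeds. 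To salvage your plan you would need an argument of this type, or a direct proof that some maximal clique containing $V_{x^\ast}$, $x^\ast=\min S_i$, is lex-greater than $B_i$; your localisation lemma is a reasonable ingredient for such a proof but does not by itself supply it.
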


\begin{proof}
We prove this proposition by induction. The case where $m=1$ is trivial. 
Suppose the proposition holds for all chordal graphs with up to $m-1$ maximal cliques.
 
Assume that there does not exist $k(1)$ in (\ref{k(i)}). 
Then there exists more than one clique $H_1,\ldots,H_t$, $t>1$, such that
\[
 C_1 := B_1 \cap (B_2 \cup \cdots \cup B_m) 
 = B_1 \cap (H_1 \cup \cdots \cup H_t), 
\]
\[
 B_1 \cap H_i \ne B_1 \cap H_j, \quad i \ne j
\]
and 
\[
 B_1 \cap H_i \subset C_1, \quad i = 1,\ldots, t. 
\]
Since every vertex of $C_1$ is included in more than one clique, it is not simplicial in $\widetilde G$ (e.g., \cite{Hara-Takemura06}).
In other words, in the subgraph $\widetilde G(B_1 \cup H_1 \cup \cdots \cup H_t)$,
every element of $C_1$ is not simplicial. 
This shows that there exists a $H_i$ such that for all $v \in H_i \setminus C_1$ and for all $v' \in C_1$, $v<v'$. 
However, this implies $B_1 > H_i$ in the sense of (\ref{lex}), which is a contradiction.
Therefore, there exists $k(1)$ in (\ref{k(i)}). 
 
This also shows that $B_1$ is a boundary clique.
We then have
\[
 \widetilde G(V \setminus \mathrm{Simp}(B_1)) = 
 \widetilde G(B_2 \cup \cdots \cup B_m) 
\]
and $\widetilde G(B_2 \cup \cdots \cup B_m)$ is a connected chordal graph with the set of maximal cliques $\{B_2,\ldots,B_m\}$ and $B_2 < \cdots < B_m$
 (e.g., \cite{Hara-Takemura06}). 
From the inductive assumption, a sequence s$B_2,\ldots,B_m$ is perfect. 
Therefore, $B_1,B_2,\ldots,B_m$ is also perfect.
\end{proof}

\section{Illustrative data analysis with real data}
\label{sec:examples}

\subsection{\cite{Kulldorff97}'s scan statistic}
\label{subsec:scan}

In this section, we provide illustrative data analyses with real data.
As stated in Section \ref{sec:introduction}, in spatial epidemiology,
\cite{Kulldorff97}'s statistic serves as a standard scan statistic.
Using the notations $X_V=(X_i)_{i\in V}$, $p_V=(p_i)_{i\in V}$, and $Z\subset V$ defined in Section \ref{sec:introduction},
this statistic is written as
\begin{equation}
\varphi_N(X_Z) = \begin{cases}
N \Bigl\{ p\left( \frac{\widehat p}{p}\log \frac{\widehat p}{p} -
     \frac{\widehat p}{p} + 1 \right) + (1-p) & \\
 \quad \times \left( \frac{1-\widehat p}{1-p}\log \frac{1-\widehat p}{1-p} -
                     \frac{1-\widehat p}{1-p} + 1 \right) \Bigr\},
   & \hspace*{-5mm} \mbox{if } \frac{\widehat p}{p}\ge 1, \\ 
 0,& \hspace*{-5mm} \mbox{otherwise}, \end{cases}
\label{Kulldorff}
\end{equation}
where
$\widehat p=\sum_{i\in Z} X_i/\sum_{i\in V} X_i$ and $p = \sum_{i\in Z}\lambda_i/\sum_{i\in V}\lambda_i$.
We employ this statistic in this section.

\subsection{Implementation}
\label{subsec:implementation}

The recursive summation algorithm for Theorem \ref{thm:recursive} is implemented in C.
The algorithm from Section \ref{sec:markov} to construct a sequence $B_1,\ldots,B_m$ having the running intersection property is implemented in MATLAB.

For $\mathcal{Z}$ given in (\ref{Z20}), if we suppose
\[
 \textstyle
 (X_i) = (2, 7, 7, 2, 2, 2, 2, 2, 2), \quad N=\sum_i X_i = 28,
\]
and $\lambda_i \equiv 1$, the maximum of the scan statistic is
$\max_{Z\in\mathcal{Z}}\varphi_Z(X_Z) = 5.167364$,
and the maximum is attained at $Z=\{2,3\}$.
The $p$-value is 0.01371293.
The number of summations is 314,621 (coinciding with the value by (\ref{order})).
The computational time with C is less than 1 sec (MacBook Air 11-inch Early 2014 1.4GHz Intel Core i5). 
Without using the proposed algorithm, the number of summations is 30,260,340,
and the computational time is 130 sec (ibid). 

\subsection{Monthly frequencies of spontaneous abortions}
\label{subsec:spontaneous_abortion}

The method of spatial scan is applicable to detect clustering in the time domain.
Figure \ref{fig:page84-barplot} depicts monthly frequencies of trisomy among karyotyped spontaneous abortions of pregnancies.
The data are taken over 24 months from July 1975 to June 1977 in three New York hospitals.
Total count is $N=62$.
The data are tabulated in \cite{Tango84}.
Using these, we detect the clustering in the time domain,
that is, the period with a frequency higher than normal.

\begin{figure}[H]
\begin{center}
\includegraphics[scale=0.578125]{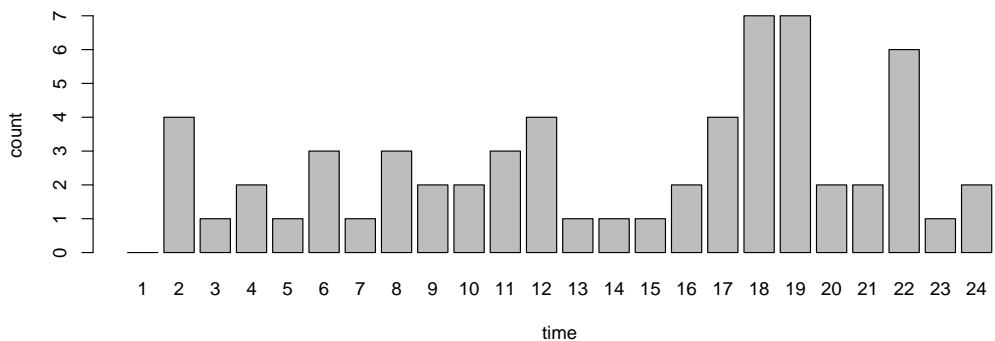} 
\end{center}
\caption{Monthly frequencies of trisomy among karyotyped spontaneous abortions of pregnancies.}
\label{fig:page84-barplot}
\end{figure}

\cite{Wallenstein80} pointed out that the maximum frequency per consecutive two months is 14 for the time $t=18,19$.
He considered the conditional distribution of the maximum number of events for any arbitrary two-month period with $N=62$ given under the condition that events occur independently and uniformly over every 24 months, and reported that a value of 14 corresponds the $p$-value of $0.038$.
For the $p$-value formula, see \cite{Wallenstein-Naus73} and \cite{Neff-Naus80}.

For the same data, we apply our method to obtain the exact $p$-value.
Let $L$ be the maximum window size, and we consider frequencies during the period less than or equal to $L$.
For example, when $L=3$, the scan window consists of all of one point of time, successive two-, and three-month periods.
That is,
\[
 \mathcal{Z} = \bigl\{ \{1\},\ldots,\{24\},\{1,2\},\ldots,\{23,24\},\{1,2,3\},\ldots,\{22,23,24\}\bigr\}.
\]
The chordal graph used here for the calculation is given in Figure \ref{fig:page84-chordal}.

\begin{figure}[H]
\begin{center}
\includegraphics[scale=0.25]{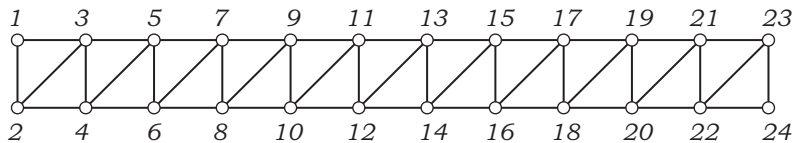}
\end{center}
\caption{Chordal graph for the temporal clustering.}
\label{fig:page84-chordal}
\end{figure}

We apply \cite{Kulldorff97}'s scan statistic in (\ref{Kulldorff}) with the assumption that $\lambda_i$ are constant.
The largest five statistics for $L=5$ are listed in Table \ref{tab:largest5}.
The last four columns provide the corresponding $p$-values,
which are also evaluated under the cases where $L$ is less than or equal to five.

\begin{table}[H]
\caption{Largest five statistics and their $p$-values.}
\label{tab:largest5}
\ \vspace*{-5mm}
\begin{center}
\begin{tabular}{cccccc}
\bhline{1pt}\rule{0pt}{2.2ex}
statistic & period & 
$L=5$ & $L=4$ & $L=3$ & $L=2$ \\
\hline
5.954    & 17,18,19       & 0.0175    & 0.0151    & 0.0135    & NA \\
%
5.847    & 18,19          & 0.0217    & 0.0194    & 0.0180    & 0.0140    \\
%
5.143    & 18,19,20,21,22 & 0.0453    & NA        & NA & NA \\
%
4.507    & 17,18,19,20    & 0.0716    & 0.0695    & NA & NA \\
%
4.507    & 16,17,18,19    & 0.0716    & 0.0695    & NA & NA \\
\bhline{1pt}
\end{tabular}
\end{center}
\end{table}

Table \ref{tab:largest5} suggests that the period $t=17,18,19$ is detected as a temporal cluster with the smallest $p$-value of 0.0135.

\subsection{Gallbladder cancer in Yamagata}
\label{subsec:yamagata}

Figure \ref{fig:yamagata-smr} depicts the choropleth map of standardized mortality ratios (SMRs) for gallbladder cancer (among male) analyzed in \cite{Tango-etal07} and \cite{Tango10}.
The data are for Yamagata Prefecture, Japan, which consists of 44 municipalities (villages, towns, and cities), for the period 1996--2000, based on the age-specific mortality rates from the 1985 national census population.
The total observed number of deaths was $N=147$.

\begin{figure}[H]
\begin{center}
\hspace*{-4cm}
\includegraphics[angle=-90,scale=0.8]{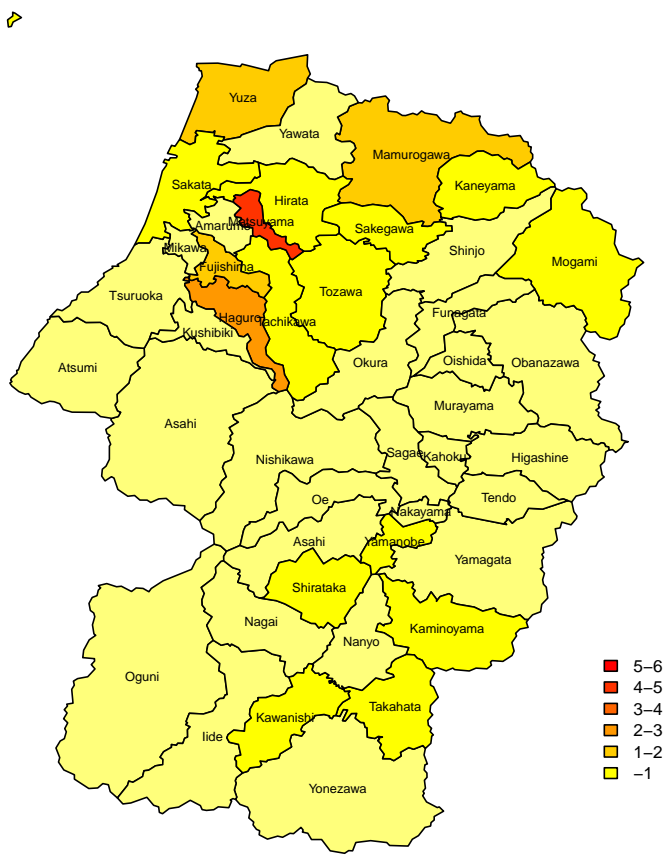}
\vspace*{-2cm}
\end{center}
\caption{SMRs of gallbladder cancer (among male) in Yamagata Prefecture 1996--2000.}
\label{fig:yamagata-smr}
\end{figure}

Figure \ref{fig:yamagata-lrt} depicts the scan statistics $\varphi_N(X_Z)$ when $Z$ consists of one district under the same condition as Figure \ref{fig:yamagata-smr}.

\begin{figure}[H]
\begin{center}
\hspace*{-4cm}
\includegraphics[angle=-90,scale=0.8]{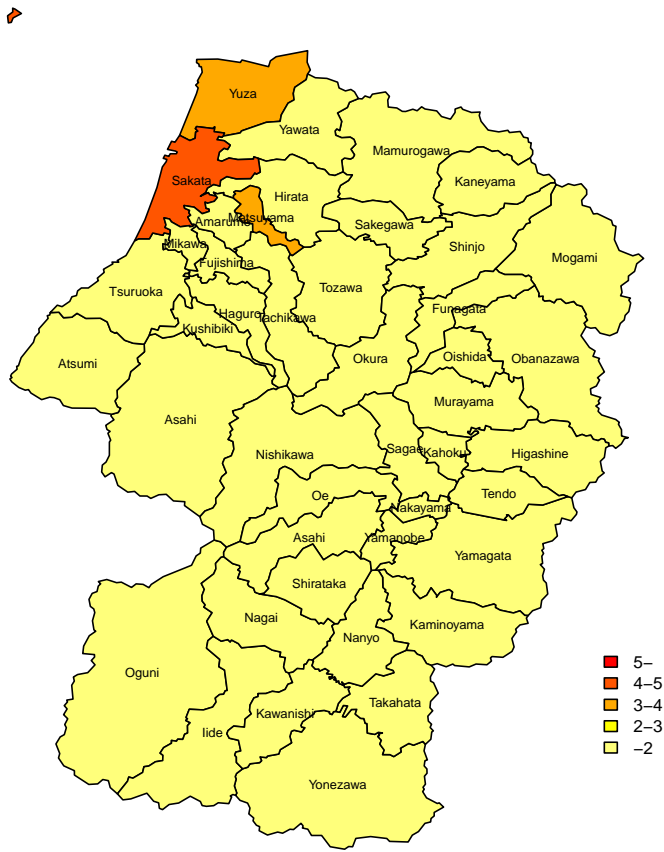}
\vspace*{-2cm}
\end{center}
\caption{Scan statistic $\varphi_N(X_Z)$ when $Z$ consists of one district.}
\label{fig:yamagata-lrt}
\end{figure}

For these data, we apply two kinds of scan windows.
(i) Each window $Z$ consists of one district (i.e., $|Z|=1$),
and hence the number of windows is equal to the number of districts.
(ii) Each window consists of one district, or two districts adjacent to each other (i.e., $|Z|\le 2$).

The number of scan windows is (i) 44 ($|Z|=1$) and (ii) 154 ($|Z|\le 2$).
Unfortunately, the computation is infeasible for the latter case because the number of summations is more than $10^{14}$ (see Table \ref{tab:graph-summary}).
Therefore, we randomly divided the 154 scan windows into two groups (groups 1 and 2), and calculated the $p$-values for each group and summed them.
This yields conservative $p$-values because this manipulation is nothing more than the Bonferroni correction.
The generated graphs and their chordal extensions are shown in Figure \ref{fig:part}, and their features are summarized in Table \ref{tab:graph-summary}.

\begin{table}[H]
\caption{Number of scan windows and generated graphs.}
\label{tab:graph-summary}
\ \vspace*{-5mm}
\begin{center}
\begin{small}
\begin{tabular}{l|cccccccr}
\bhline{1pt}\rule{0pt}{2.2ex}
 & $N$ & $M$ & $n$ & $e$ & $\widetilde e-e$ & $m$ & deg & \# of summations \\
\hline
whole data & 147 & 154 & 44 & 110 & 47 & 35 & 8 & 141,445,034,516,085 \\
group 1    & 147 &  76 & 44 &  56 & 16 & 33 & 6 & 82,837,604,771 \\
group 2    & 147 &  78 & 44 &  48 & 10 & 33 & 6 & 35,091,700,432 \\
\bhline{1pt}
\end{tabular}

\medskip
\parbox[t]{12cm}{\small
$N$: \# of total events ($=\sum X_i$),
$M$: \# of scan windows, $n$: \# of vertices, \\ $e$: \# of edges of original graph $G$, $\widetilde e$: \# of edges in the chordal graph $\widetilde G$, \\
$m$: \# of cliques of $\widetilde G$, deg: defined in (\ref{deg}).}
\end{small}
\end{center}
\end{table}

Table \ref{tab:largest} shows several of the largest scan statistics for the windows $|Z|=1$ and $|Z|\le 2$, and the corresponding $p$-values.
The district of Sakata and Yuza is detected as a spatial disease cluster with the smallest $p$-value of 0.00953.


\begin{table}[H]
\caption{Several of the largest statistics and their $p$-values.}
\label{tab:largest}
\ \vspace*{-5mm}
\begin{center}
\begin{small}
\begin{tabular}{cccc}
\bhline{1pt}\rule{0pt}{2.2ex}
statistic & districts & $|Z|\le 2$ & $|Z|=1$ \\
\hline
7.651    & \{Sa,\,Yu\} & \ 0.00953    & NA \\ 
%
4.578    & \{Sa\}      & 0.1847     & 0.0433    \\ 
%
4.356    & \{Sa,\,Hi\} & 0.2247     & NA       \\ 
%
4.247    & \{Sa,\,Mi\} & 0.2541     & NA       \\ 
%
3.924    & \{Sa,\,Am\} & 0.3444     & NA       \\ 
%
3.570    & \{Sa,\,Ya\} & 0.4795     & NA       \\ 
%
3.364    & \{Ma\}      & 0.5699     & 0.1739   \\ 
%
3.205    & \{Fu,\,Ha\} & 0.6458     & NA       \\ 
%
3.071    & \{Yu\}      & 0.7025     & 0.2065   \\ 
\bhline{1pt}
\end{tabular}

\medskip
\parbox[t]{11cm}{\small
Sa: Sakata (municipality code: 06204), Am: Amarume (06422), \\ Fu: Fujishima (06423), Ha: Haguro (06424), Mi: Mikawa (06426), \\ Yu: Yuza (06461), Ya: Yawata (06462), Ma: Matsuyama (06463), \\ Hi: Hirata (06464).}
\end{small}
\end{center}
\end{table}

\begin{figure}[H]
\begin{center}
\begin{tabular}{ll}
 \hspace*{-35mm}
 \includegraphics[angle=-90,scale=0.5]{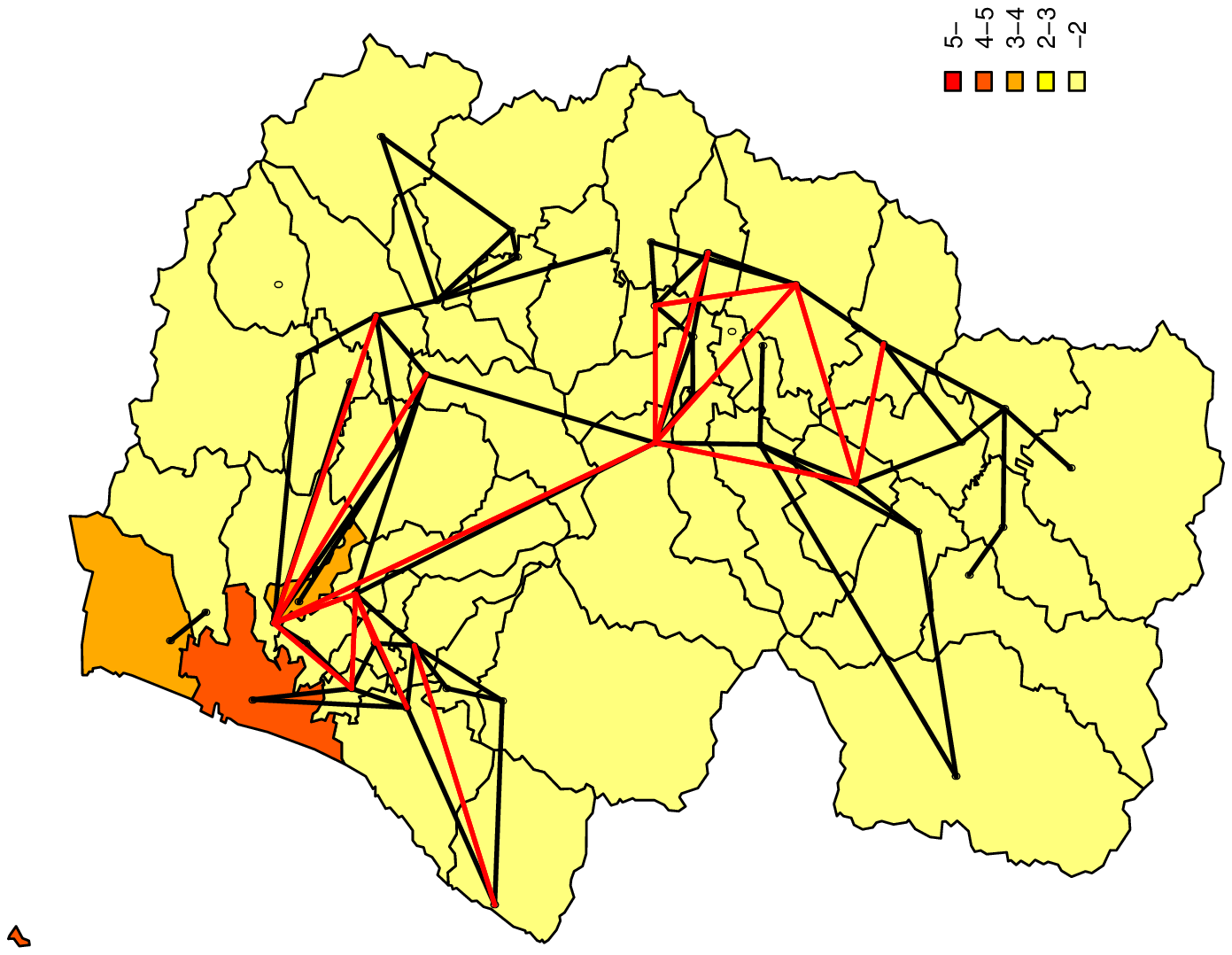} &
 \hspace*{-80mm}
 \includegraphics[angle=-90,scale=0.5]{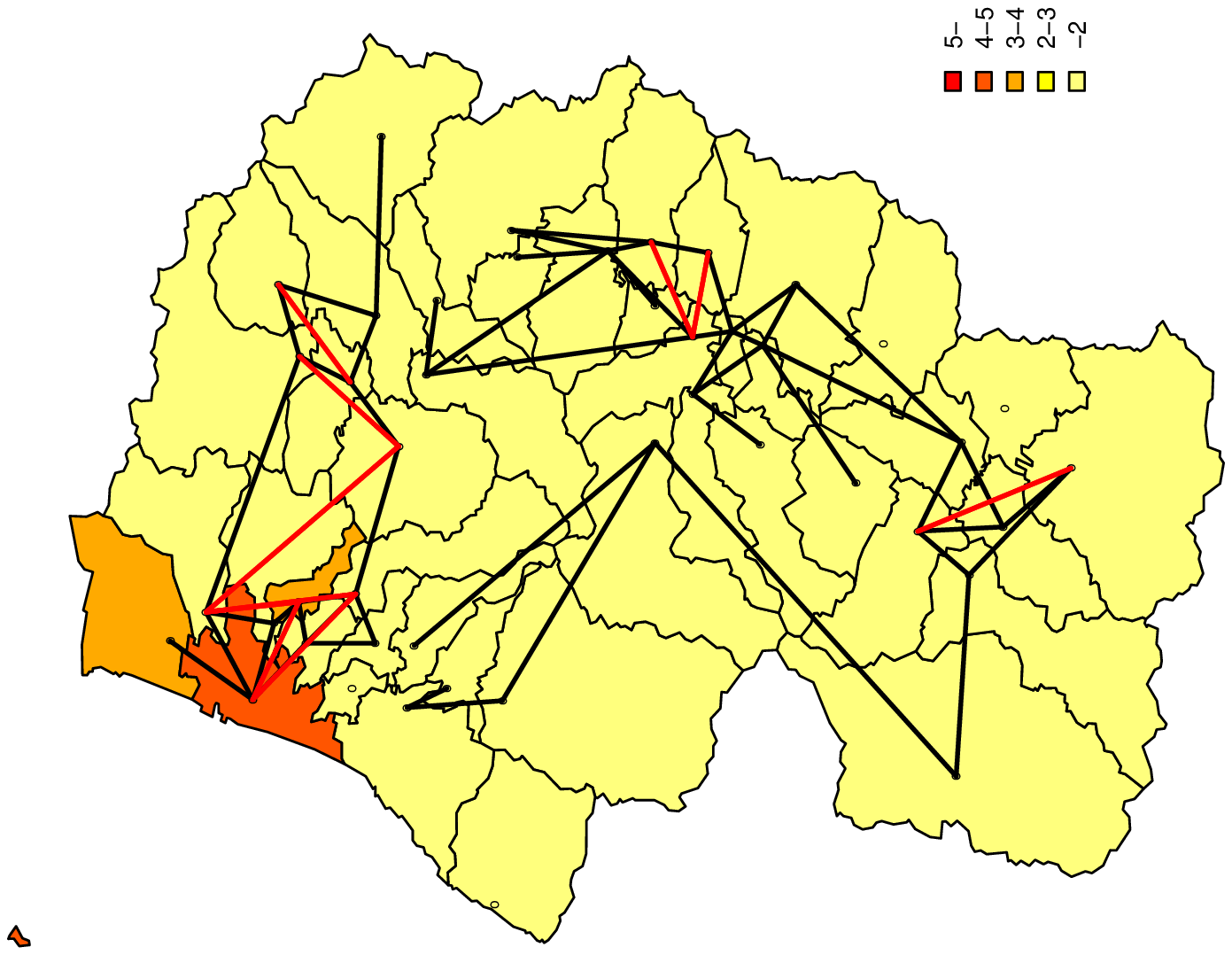}
\end{tabular}
\vspace{-10mm}
\caption{Generated graphs for the windows $|Z|\le 2$
 (left: group 1, right: group 2; circle: municipal capital, black line: edges of $G$, red line: additional edges for the chordal extension).}
\end{center}
\label{fig:part}
\end{figure}

\section{Summary and additional remarks}
\label{sec:summary}

In this paper, we proposed a recursive summation method to evaluate a class of expectation in multinomial distribution, and applied it to the evaluation of the $p$-values of temporal and spatial scan statistics.
This approach enabled us to evaluate the exact multiplicity-adjusted $p$-values.
Our proposal has an advantage where the true $p$-value is too small and is barely estimated precisely by Monte Carlo simulations.

The proposed algorithm is easily modified to a class of distributions including
the normal distribution, the Dirichlet distribution, the multivariate hypergeometric distribution, and the Dirichlet-multinomial distribution
by replacing recursive summations with recursive numerical integrations if necessary.

On the other hand, our proposed method has a limitation that it only works when the total data count $N$ and the window size are not large.
The limitation is due to $N$, $|B_i|$ (the sizes of the maximum clique), and $|k^{-1}(i)|$.
As shown in Section \ref{subsec:yamagata}, when the size of the scan windows is large, we can divide the whole scan windows into a number of groups, then compute the $p$-value for each group, and sum them.

We have implemented the proposed algorithms.
However, they are still under development.
Our algorithm requires {\tt for} loop calculations,
where the numbers of the nests and the ranges of running variables depend on the input (scan windows).
The dimensions of the arrays also depend on the data.
These make source coding complicated, and the resulting code inefficient.
One approach to overcome this difficulty may be the use of a preprocessor to generate the C program.
This approach has been successfully used in computer algebra (e.g., page 674 of \cite{Koyama-etal14}).

\subsubsection*{Acknowledgments}

The authors are grateful to Takashi Tsuchiya, Anthony J.\ Hayter, Nobuki Takayama, and Yi-Ching Yao for their helpful comments.
This work was supported by JSPS KAKENHI Grant Numbers 21500288 and 24500356.

\bibliographystyle{elsarticle-harv}

\end{document}